\long\def\COMMENT#1{}
\renewcommand{\And}{\wedge}
\newcommand{\Or}{\vee}
\newcommand{\syntree}[1]{\mathit{S}_{#1}}
\newcommand{\false}{\mathbf{false}}
\newcommand{\true}{\mathbf{true}}
\newcommand{\comment}[1]{}
\newcommand{\defAs}
   {\mbox{$\:= \! \! \raisebox{-0.5 ex}[0 ex][0 ex]{\tiny Def}\:$}}
\newcommand{\Hf}{H_{\varphi}}
\newcommand{\model}{\ensuremath{\mbox{\boldmath $\mathcal{M}$}}\xspace}
\newcommand{\pow}{\mbox{\rm pow}}
\newcommand{\Rel}{\mbox{\rm Rel}}
\newcommand{\SubF}{\mathit{SubF}}
\newcommand{\Kr}{\mathit{K}}
\newcommand{\K}{\mathsf{K}}
\newcommand{\T}{\mathsf{T}}
\newcommand{\Ac}{\mathbf{5}}
\newcommand{\B}{\mathsf{B}}
\newcommand{\Aq}{\mathbf{4}}
\newcommand{\D}{\mathsf{D}}
\newcommand{\Sc}{\mathsf{S5}}
\newcommand{\Kqc}{\mathsf{K45}}
\newcommand{\TKqc}{\tau_{\Kqc}}
\newcommand{\TKqcuno}{\tau_{\Kqc}^{1}}
\newcommand{\TKqcdue}{\tau_{\Kqc}^{2}}
\newcommand{\QLQSR}{\ensuremath{\mbox{$4\mathit{LQS}^{R}$}}\xspace}
\newcommand{\TLQSR}{\ensuremath{\mbox{$3\mathit{LQS}^{R}$}}\xspace}
\newcommand{\QLQS}{\ensuremath{\mbox{$4\mathit{LQS}$}}\xspace}
\begin{document}

\title{On the satisfiability problem for a 4-level quantified syllogistic and some applications to modal logic (extended version)}

\author{Domenico Cantone \\
Dipartimento di Matematica e Informatica, Universit\`a di
Catania\\
      Viale A. Doria 6, I-95125 Catania, Italy \\
cantone@dmi.unict.it\\
\and Marianna Nicolosi Asmundo \\
Dipartimento di Matematica e Informatica, Universit\`a di
Catania\\
      Viale A. Doria 6, I-95125 Catania, Italy \\
      nicolosi@dmi.unict.it}
\maketitle


\begin{abstract}
We introduce a multi-sorted stratified syllogistic, called $\QLQSR$,
admitting variables of four sorts and a restricted form of
quantification over variables of the first three sorts, and prove that
it has a solvable satisfiability problem by showing that it enjoys a
small model property.  Then, we consider the fragments $(\QLQSR)^h$ of
$\QLQSR$, consisting of $\QLQSR$-formulae whose quantifier prefixes
have length bounded by $h \geq 2$ and satisfying certain syntactic
constraints, and prove that each of them has an \textsf{NP}-complete
satisfiability problem.  Finally we show that the modal logic $\Kqc$
can be expressed in $(\QLQSR)^3$.
\end{abstract}

\section{Introduction}
Most of the decidability results in computable set theory concern
one-sorted multi-level syllogistics, namely collections of formulae
admitting variables of one sort only, which range over the von Neumann
universe of sets (see \cite{CFO89,COO01} for a thorough account of the
state-of-art until 2001).
Only a few stratified syllogistics, where
variables of different sorts are allowed, have been investigated,
despite the fact that in many fields of computer science and
mathematics often one has to deal with multi-sorted languages.\footnote{The locutions `multi-level syllogistic' and `stratified syllogistic' were chosen by Jack Schwartz to
name many decidable fragments of computable set theory because he saw them as generalizations of Aristotelian syllogistics.}
For instance, in modal logics, one has to consider entities of different
types, namely worlds, formulae, and accessibility relations.

In \cite{FerOm1978} an efficient decision procedure was presented for
the satisfiability of the Two-Level Syllogistic language ($2LS$).
$2LS$ has variables of two sorts and admits propositional connectives
together with the basic set-theoretic operators $\cup,\cap,
\setminus$, and the predicate symbols $=,\in$, and $\subseteq$.  Then,
in \cite{CanCut90}, it was shown that the extension of $2LS$ with the
singleton operator and the Cartesian product operator is decidable.
Tarski's and Presburger's arithmetics extended with sets have been
analyzed in \cite{CCS90}.  Subsequently, in \cite{CanCut93}, a
three-sorted language $3LSSPU$ (Three-Level Syllogistic with
Singleton, Powerset and general Union) has been proved decidable.
Recently, in \cite{CanNic08}, it was shown that the language $\TLQSR$
(Three-Level Quantified Syllogistic with Restricted quantifiers) has a
decidable satisfiability problem.  $\TLQSR$ admits variables of three
sorts and a restricted form of quantification.  Its vocabulary
contains only the predicate symbols $=$ and $\in$.  In spite of that,
$\TLQSR$ allows one to express several constructs of set theory.
Among them, the most comprehensive one is the set-formation operator, which in
turn enables one to express other operators like the powerset
operator, the singleton operator, and so on.  In \cite{CanNic08} it is
also shown that the modal logic $\Sc$ can be expressed in a
fragment of $\TLQSR$, whose satisfiability problem is
\textsf{NP}-complete.

In this paper we present a decidability result for the satisfiability
problem of the set-theoretic language $\QLQSR$ (Four-Level Quantified
Syllogistic with Restricted quantifiers).  $\QLQSR$ is an extension of
$\TLQSR$ admitting variables of four sorts and a restricted form of
quantification over variables of the first three sorts.  In addition
to the predicate symbols $=$ and $\in$, its vocabulary contains also the
pairing operator $\langle \cdot,\cdot\rangle$.

We will prove that the theory $\QLQSR$ enjoys a small model property
by showing how one can extract, out of a given model satisfying a
$\QLQSR$-formula $\psi$, another model of $\psi$ but of bounded finite
cardinality.  The construction of the finite model extends the
decision algorithm described in \cite{CanNic08}.  Concerning complexity
issues, we will show that the satisfiability problem for each of the
fragments $(\QLQSR)^h$ of $\QLQSR$, whose formulae are restricted to
have their quantifier prefixes of length at most $h \geq 2$ and must
satisfy certain additional syntactic constraints to be seen later, is
\textsf{NP}-complete.

In addition to the modal logic $\Sc$, already expressible in the
language $\TLQSR$, it turns out that in $\QLQSR$ one can also
formalize several properties of binary relations (needed to define
accessibility relations of well-known modal logics) and some Boolean
operations over relations and the inverse operation over binary
relations.  We will also show that the modal logic $\Kqc$ can be
formalized in the fragment $(\QLQSR)^3$.  As is well-known, the
satisfiability problem for $\Kqc$ is \textsf{NP}-complete; thus our
alternative decision procedure for $\Kqc$ can be considered optimal in
terms of its computational complexity.

\section{The language $\QLQSR$}\label{language}
Before defining the language $\QLQSR$ of our interest, it is
convenient to present the syntax and the semantics of a more general,
unrestricted four-level quantified fragment, denoted $\QLQS$.
Subsequently, we will introduce suitable restrictions over the
formulae of $\QLQS$ to characterize the sublanguage $\QLQSR$.

\subsection{The unrestricted language $\QLQS$}\label{genericlanguage}

\paragraph{Syntax of $\QLQS$.}
The four-level quantified language $\QLQS$ involves the four collections
${\cal V}_{0}$, ${\cal V}_{1}$, ${\cal V}_{2}$, and ${\cal V}_{3}$ of variables. Each ${\cal V}_{i}$
contains variables of \emph{sort i}, denoted by $X^{i},Y^{i},Z^{i},\ldots$.
When we refer to variables of sort 0 we prefer to write $x,y,z,\ldots$ instead of $X^{0},Y^{0},Z^{0},\ldots$.
In addition to the variables in ${\cal V}_{2}$, terms of sort 2
include also \emph{pair terms} of the form $\langle x,y\rangle$, for
$x,y \in {\cal V}_{0}$.\\[.2cm]
$\QLQS$ \emph{quantifier-free atomic formulae} are classified as:
\begin{description}
\item[level $0$:] $x = y$, $x \in X^{1}$, for $x,y \in {\cal V}_{0},
X^{1} \in {\cal V}_{1}$;

\item[level $1$:] $X^{1} = Y^{1}$, $X^{1} \in X^{2}$, for $X^{1},Y^{1}
\in {\cal V}_{1}, X^{2} \in {\cal V}_{2}$;

\item[level $2$:] $T^{2} = U^{2}$, $T^{2} \in X^{3}$, where  $T^{2}$
and $U^{2}$ are terms of sort 2 and $X^{3} \in {\cal V}_{3}$.
\end{description}
$\QLQS$ \emph{purely universal formulae} are classified as:
\begin{description}
    \item[level $1$:] $(\forall z_{1}) \ldots (\forall z_{n})
    \varphi_{0}$, where $\varphi_{0}$ is any propositional combination
    of quantifier-free atomic formulae and $z_{1},\ldots,z_{n}$ are
    variables of sort $0$;

    \item[level $2$:] $(\forall Z_{1}^{1}) \ldots (\forall Z_{m}^{1})
    \varphi_{1}$, where $\varphi_{1}$ is any propositional combination
    of quantifier-free atomic formulae and of purely universal
    formulae of level 1, and $Z_{1}^{1},\ldots,Z_{m}^{1} \in
    \mathcal{V}_{1}$;

    \item[level $3$:] $(\forall Z_{1}^{2}) \ldots (\forall Z_{p}^{2})
    \varphi_{2}$, where $\varphi_{2}$ is any propositional combination
    of quantifier-free atomic formulae and of purely universal
    formulae of levels 1 and 2, and $Z_{1}^{2},\ldots,Z_{p}^{2} \in
    \mathcal{V}_{2}$.
\end{description}
Finally, the formulae of $\QLQS$ are all the propositional
combinations of quantifier-free atomic formulae of levels $0,1$, $2$,
and of purely universal formulae of levels $1$, $2$, $3$.

Next we introduce some notions that will be useful in the rest of the
paper.  Let $\varphi$ be a $\QLQS$-formula.  We can assume, without loss
of generality, that $\varphi$ contains as propositional connectives only
`$\neg$', `$\Or$', and `$\And$'.  Further, let $\syntree{\varphi}$ be the
syntax tree for $\varphi$ (see \cite{DJ90} for a precise definition), and
let $\nu$ be a node of $\syntree{\varphi}$.  We say that a
$\QLQS$-formula $\psi$ \emph{occurs} within $\varphi$ at position
$\nu$ if the subtree of $\syntree{\varphi}$ rooted at $\nu$ is identical
to $\syntree{\psi}$.  In this case we refer to $\nu$ as an
\emph{occurrence} of $\psi$ in $\varphi$ and to the path from the root
of $\syntree{\varphi}$ to $\nu$ as its \emph{occurrence path}.
An occurrence of a $\QLQS$-formula $\psi$ within a $\QLQS$-formula
$\varphi$ is \emph{positive} if its occurrence path deprived of its last
node contains an even number of nodes labelled by a $\QLQS$-formula of
type $\neg \chi$.  Otherwise, the occurrence is said to be
\emph{negative}.

\paragraph{Semantics of $\QLQS$.}
A {\em $\QLQS$-interpretation\/} is a pair $\model=(D,M)$, where $D$
is any \emph{nonempty} collection of objects, called the {\em
domain\/} or {\em universe\/} of $\model$, and $M$ is an assignment
over the variables of $\QLQS$ such that
    \begin{itemize}
        \item  $Mx \in D$, for each $x \in {\cal V}_{0}$;

        \item  $MX^{1} \in \pow(D)$, for each $X^{1} \in {\cal V}_{1}$;

        \item  $MX^{2} \in \pow(\pow(D))$, for each
        $X^{2} \in {\cal V}_{2}$;

        \item  $MX^{3} \in \pow(\pow(\pow(D)))$, for each
        $X^{3} \in {\cal V}_{3}$.\footnote{We recall that, for any set
    $s$, $\pow(s)$ denotes the \emph{powerset} of $s$, i.e., the
    collection of all subsets of $s$.}
    \end{itemize}
We assume that pair terms are interpreted \emph{\'{a} la} Kuratowski,
and therefore we put
\[
M\langle x,y\rangle \defAs \{\{Mx\},\{Mx,My\}\}\,.
\]
The introduction of a pairing operator in the language turned out to
be very useful in view of the applications in Section
\ref{sec:applications}.  Moreover, even if many pairing operations are
available (see for instance \cite{FoOmPo04}),
Kuratowski's style of encoding ordered pairs results to be quite
simple, at least for our purposes.

Let
\begin{itemize}
    \item[-]$\model=(D,M)$ be a $\QLQS$-interpretation,
    \item[-]  $x_{1},\ldots,x_{n} \in {\cal V}_{0}$,

    \item[-]  $X_{1}^{1},\ldots,X_{m}^{1} \in {\cal V}_{1}$,

    \item[-]  $X_{1}^{2},\ldots,X_{p}^{2} \in {\cal V}_{2}$,

    \item[-]  $u_{1},\ldots,u_{n} \in D$,

    \item[-]  $U_{1}^{1},\ldots,U_{m}^{1} \in \pow(D)$,

    \item[-]  $U_{1}^{2},\ldots,U_{p}^{2} \in \pow(\pow(D))$.
\end{itemize}
By $\model[x_{1}/u_{1},\ldots,x_{n}/u_{n},
         X_{1}^{1}/U_{1}^{1},\ldots,X_{m}^{1}/U_{m}^{1},
         X_{1}^{2}/U_{1}^{2},\ldots,X_{p}^2/U_{p}^{2}]\,$,
we denote the interpretation $\model' = (D,M')$ such that $M'x_{i}  =  u_{i}$, for $i = 1,\ldots,n$,
$M'X_{j}^{1} =  U_{j}^{1}$,  for $j = 1,\ldots,m$, $M'X_{k}^{2} =  U_{k}^{2}\,$, for $k = 1,\ldots,p$,
and which otherwise coincides with $\model$ on all remaining
variables.
Throughout the paper we use the abbreviations: ${\cal
M}^{z}$ for $\model[z_{1}/u_{1},\ldots,z_{n}/u_{n}]$, ${\cal
M}^{Z^1}$ for $\model[Z_{1}^{1}/U_{1}^{1},\ldots,Z_{m}^{1}/U_{m}^{1}]$, and ${\cal
M}^{Z^2}$ for $\model[Z_{1}^{2}/U_{1}^{2},\ldots,Z_{p}^2/U_{p}^{2}]$.

Let $\varphi$ be a $\QLQS$-formula and let $\model = (D, M)$ be a
$\QLQS$-interpre\-ta\-tion.  The notion of {\em satisfiability} of
$\varphi$ by $\model$ (denoted by $\model \models
\varphi$) is defined inductively over the structure of $\varphi$.
Quantifier-free atomic formulae are interpreted in the standard way
according to the usual meaning of the predicates `=' and `$\in$', and
purely universal formulae are evaluated as follows:
\begin{itemize}
\item[1.] $\model \models (\forall z_1) \ldots (\forall z_n) \varphi_0$
~~iff~~ $\model[z_1/u_1,\ldots , z_n/u_n] \models \varphi_0$, for
all $u_1,\ldots ,u_n \in D$;
\item[2.]  $\model \models (\forall Z_{1}^{1}) \ldots (\forall Z_{m}^{1})
\varphi_1$ iff $\model[Z_{1}^{1}/U_{1}^{1},\ldots , Z_{m}^{1}/U_{m}^{1}] \models
\varphi_1$, for all $U_{1}^{1},\ldots , U_{m}^{1} \in \pow(D)$;
\item[3.]  $\model \models (\forall Z_{1}^{2}) \ldots (\forall Z_{p}^{2})
\varphi_2$ iff $\model[Z_{1}^{2}/U_{1}^{2},\ldots , Z_{p}^{2}/U_{p}^{2}] \models
\varphi_2$, for all $U_{1}^{2},\ldots , U_{p}^{2} \in \pow(\pow(D))$.
\end{itemize}
Finally, evaluation of compound formulae follows the standard
rules of propositional logic.  If $\model \models \varphi$, i.e. $\model$
{\em satisfies\/} $\varphi$, then $\model$ is said to be a $\QLQS$-{\em
model\/} for $\varphi$.  A $\QLQS$-formula is said to be {\em
satisfiable\/} if it has a $\QLQS$-model.  A $\QLQS$-formula is
\emph{valid} if it is satisfied by all $\QLQS$-interpretations.

\subsection{Characterizing $\QLQSR$}\label{restrictionquant}
$\QLQSR$ is the subcollection of the formulae $\psi$ of $\QLQS$ for which the following restrictions hold.

\begin{enumerate}[{Restr.} I.]
\item \label{restriction1} For every purely universal formula $(\forall Z_1^1),\ldots,(\forall
Z_m^1)\varphi_1$ of level 2 occurring in $\psi$ and every purely universal formula $(\forall z_1) \ldots (\forall z_n)
\varphi_0$ of level
$1$ occurring negatively in $\varphi_1$, $\varphi_0$ is a propositional combination of level 0 quantifier-free atomic formulae and the
condition
\begin{equation}
    \label{condition}
\neg \varphi_0 \rightarrow \bigwedge_{i=1}^n \bigwedge_{j=1}^m  z_i \in Z_j^1
\end{equation}
is a valid $\QLQS$-formula (in this case we say that the formula
$(\forall z_1) \ldots (\forall z_n) \varphi_0$ is {\em linked} to the
variables $Z_1^1, \ldots , Z_m^1$).
\item \label{restriction2} For
every purely universal formula $(\forall Z_1^2),\ldots,(\forall Z_p^2)\varphi_2$ of level 3 occurring in $\psi$
\begin{itemize}
\item every purely universal formula of level 1 occurring negatively
in $\varphi_2$ and not occurring in a purely universal formula of
level 2, is only allowed to be of the form

    $(\forall z_1)\ldots(\forall z_n)\neg(\bigwedge_{i=1}^n
    \bigwedge_{j=1}^n\langle z_i,z_j\rangle=Y_{ij}^2)\,$, where
    $Y_{ij}^2 \in {\cal V}_2$, for $i,j = 1,\ldots,n$;

    \item purely universal formulae $(\forall Z_1^1),\ldots,(\forall
    Z_m^1)\varphi_1$ of level 2 may occur only positively in
    $\varphi_2$.
\end{itemize}

\end{enumerate}
Restriction~\ref{restriction1} is similar to the one described in \cite{CanNic08}.
In particular, following \cite{CanNic08}, we recall that condition~(\ref{condition})
guarantees that if a given interpretation assigns to $z_1,\ldots,z_n$
elements of the domain that make $\varphi_0$ false, then such elements
must be contained in the intersection of the sets assigned to
$Z_1^1,\ldots,Z_m^1$.  This fact is needed in the proof of statement
(ii) of Lemma \ref{quantifiedform} to make sure that satisfiability is
preserved in a suitable finite submodel (details, however, are not
reported here and can be found in \cite{CanNic08}).

Through several examples, in \cite{CanNic08} it is argued that
condition~(\ref{condition}) is not particularly restrictive.  Indeed,
to establish whether a given $\QLQS$-formula is a $\QLQSR$-formula,
since condition~(\ref{condition}) is a $2LS$-formula, its validity can
be checked using the decision procedure in \cite{FerOm1978}, as
$\QLQS$ is a conservative extension of $2LS$.  In addition, in many
cases of interest, condition~(\ref{condition}) is just an instance of
the simple propositional tautology $\neg( A \rightarrow B) \rightarrow
A$, and thus its validity can be established just by inspection.

Restriction~\ref{restriction2} has been introduced to be able to
express binary relations and several operations on them while keeping simple, at the same time, the decision procedure presented in Section \ref{decisionproc}.

Finally, we observe that though the semantics of $\QLQSR$ plainly
coincides with that of $\QLQS$, in what follows we prefer to
refer to $\QLQS$-interpretations of $\QLQSR$-formulae as
\emph{$\QLQSR$-interpretations}.

\section{The satisfiability problem for
$\QLQSR$-formulae}\label{satisfiability}

We will solve the satisfiability problem for $\QLQSR$, i.e. the
problem of establishing for any given formula of $\QLQSR$ whether it
is satisfiable or not, as follows:
\begin{enumerate}[(i)]
    \item \label{satisf1} firstly, we will show how to reduce
    effectively the satisfiability problem for $\QLQSR$-formulae to
    the satisfiability problem for \emph{normalized
    $\QLQSR$-conjunctions} (these will be defined shortly);

    \item \label{satisf2} secondly, we will prove that the collection
    of normalized $\QLQSR$-conjunctions enjoys a small model property.
\end{enumerate}
From (\ref{satisf1}) and (\ref{satisf2}), the solvability of the satisfiability problem for
$\QLQSR$ follows immediately.  Additionally, by further elaborating on
point (\ref{satisf1}), it could easily be shown that indeed the whole collection
of $\QLQSR$-formulae enjoys a small model property.

\subsection{Normalized $\QLQSR$-conjunctions}\label{normal3LQS}
Let $\psi$ be a formula of $\QLQSR$ and let $\psi_{DNF}$ be a
disjunctive normal form of $\psi$.  Then $\psi$ is satisfiable if
and only if at least one of the disjuncts of $\psi_{DNF}$ is
satisfiable. We recall that the disjuncts of $\psi_{DNF}$ are
conjunctions of literals, namely atomic formulae or their negation.\footnote{Atomic formulae are
quantified atomic formulae and purely universal formulae of any level.}
In view of the previous observations,
without loss of generality, we can suppose that our formula $\psi$
is a conjunction of level $0,1,2$ quantifier-free literals and
of level $1,2,3$ quantified literals.  In addition, we
can also assume that no variable occurs both bound and free in $\psi$
and that distinct occurrences of quantifiers bind distinct
variables.

For decidability purposes, negative quantified conjuncts occurring in
$\psi$ can be eliminated as follows.  Let $\model = (D, M)$ be a model
for $\psi$, and let $\neg (\forall z_1) \ldots (\forall z_n)
\varphi_0$ be a negative quantified literal of level 1 occurring in
$\psi$.  Since $\model \models \neg (\forall z_1) \ldots (\forall z_n)
\varphi_0$ if and only if $\model[z_1/u_1,\ldots , z_n/u_n] \models
\neg \varphi_0$, for some $u_1,\ldots ,u_n \in D$, we can replace
$\neg (\forall z_1) \ldots (\forall z_n) \varphi_0$ in $\psi$ by $\neg
(\varphi_{0})^{z_{1},\ldots,z_{n}}_{z'_{1},\ldots,z'_{n}}$, where
$z'_{1},\ldots,z'_{n}$ are newly introduced variables of sort 0.
Negative quantified literals of levels 2 and 3 can be dealt with much
in the same way and hence, we can further assume that $\psi$ is a
conjunction of literals of the following types:

\begin{itemize}
\item[(1)] quantifier-free literals of any level;

\item[(2)] purely universal formulae of level 1;

\item[(3)] purely universal formulae of level 2 and 3 satisfying
Restrictions \ref{restriction1} and \ref{restriction2} given in
Section \ref{restrictionquant}, respectively.
\end{itemize}
We call these formulae {\em normalized $\QLQSR$-conjunctions}.

\subsection{A small model property for normalized $\QLQSR$-conjunctions}
\label{decisionproc}
In view of the above reductions, we can limit ourselves to consider
the satisfiability problem for normalized $\QLQSR$-conjunctions only.
Thus, let $\psi$ be a normalized $\QLQSR$-conjunction and assume that
$\model = (D, M)$ is a model for $\psi$.

We show how to construct, out of the model $\model$, a finite
$\QLQSR$-interpretation $\model^* = (D^*,M^*)$ which is a model of
$\psi$ sufficiently rich to reconstruct any possible counter-example to the formula and
such that the size of $D^*$ depends solely on the size of
$\psi$.  We will proceed as follows.  First, in Section \ref{ssseUniv}, we outline a procedure for
the construction of a nonempty finite universe $D^* \subseteq D$. In Steps 1 to 3 $D^*$
is provided with enough elements to properly interpret quantifier-free atomic formulae. Cases involving
variables of levels 2 and 3 are treated in Step 2 by introducing an additional set of new variables, ${\cal V}_{1}^{F}$.
Finally, in Step 4 $D^*$ is further enriched to take care of purely universal formulae of level 2.
Then we show how to relativize $\model$ to $D^*$ according to
Definition \ref{relintrp} below, thus defining a finite
$\QLQSR$-interpretation $\model^* = (D^*,M^*)$.  Finally, we prove
that $\model^*$ satisfies $\psi$.

\subsubsection{Construction of the universe $D^*$}
\label{ssseUniv}
Let us denote by ${\cal V}_0'$, ${\cal V}_1'$, and ${\cal V}_2'$ the
collections of variables of sort $0$, $1$, and $2$ occurring free in
$\psi$, respectively.  We construct $D^*$ according to the following
steps:
\begin{description}
\item [Step 1:] Let $\mathcal{F} = \mathcal{F}_1 \cup \mathcal{F}_2$,
where
\begin{itemize}
\item $\mathcal{F}_1$ `distinguishes' the set $S = \{MX^{2} : X^{2}
\in {\cal V}_2'\}$, in the sense that $K \cap \mathcal{F}_1 \neq K'
\cap \mathcal{F}_1$ for every distinct $K,K' \in S$.  Such a set
$\mathcal{F}_1$ can be constructed by the procedure {\em Distinguish}
described in \cite{CanFer1995}.  As shown in \cite{CanFer1995}, we can
also assume that $|\mathcal{F}_1| \leq |S| -1$.

\item $\mathcal{F}_2$ satisfies $|MX^{2} \cap \mathcal{F}_2| \geq
\min(3, |MX^{2}|)$, for every $X^{2} \in {\cal V}_2'$. Plainly, we
can also assume that $|\mathcal{F}_2| \leq 3 \cdot |{\cal V}_2'|$.
\end{itemize}

\item [Step 2:] Let $\{F_1,\ldots,F_k\} = \mathcal{F} \setminus \{MX^1
: X^1 \in {\cal V}_1'\}$ and let ${\cal V}_{1}^{F} =
\{X_{1}^{1},\ldots,X_{k}^{1}\}\subseteq {\cal V}_1$ be such that
${\cal V}_{1}^{F} \cap {\cal V}_1' = \emptyset$ and ${\cal V}_{1}^{F}
\cap {\cal V}_{1}^{B} = \emptyset$, where ${\cal V}_{1}^{B}$ is the
collection of bound variables in $\psi$.  Let $\overline{\model}$ be
the interpretation $\model[X_{1}^{1}/F_1,\ldots, X_{k}^{1}/F_k]$.
Since the variables in ${\cal V}_{1}^{F}$ do not occur in $\psi$
(neither free nor bound), their evaluation is immaterial for $\psi$
and therefore, from now on, we identify $\overline{\model}$ and
$\model$.

\item [Step 3:] Let $\Delta = \Delta_{1} \cup \Delta_{2}$, where
\begin{itemize}
\item $\Delta_{1}$ distinguishes the set $T = \{MX^1 : X^1 \in {\cal
V}_1'\cup {\cal V}_1^F\}$ and $|\Delta_{1}| \leq |T| -1$ holds (cf.\
Step 1 above);

\item $\Delta_{2}$ satisfies $|J \cap \Delta_{2}| \geq \min(3, |J|)$,
for every $J \in \{MX^1 : X^1 \in {\cal V}_1'\cup {\cal V}_1^F\}$.
Plainly, we can assume that $|\Delta_{2}| \leq 3 \cdot |{\cal
V}_1'\cup {\cal V}_1^F|$.

We initialize $D^*$ by putting
$$
D^* :=\{Mx : x \hbox{ in } {\cal V}_0'\} \cup \Delta\,.
$$
\emph{($D^*$ will possibly be enlarged during the subsequent Step 4.)}
\end{itemize}

\item [Step 4:] Let $\chi_1, \ldots , \chi_r$ be all the purely universal
formulae of level 2 occurring in $\psi$. To each conjunct
$\chi_i \equiv (\forall Z_{i,h_1}^{1})
\ldots (\forall Z_{i,h_{m_i}}^{1}) \varphi_i$, we associate the
collection $\varphi_{i,k_1}, \ldots , \varphi_{i,k_{\ell_i}}$ of atomic
formulae of the form $(\forall z_1) \ldots (\forall z_n) \varphi_0$
present in the matrix of $\chi_i$, and call the variables
$Z_{i,h_1}^{1}, \ldots , Z_{i,h_{m_i}}^{1}$ the {\em arguments of }
$\varphi_{i,k_{1}}, \ldots , \varphi_{i,k_{\ell_i}}$.

Let us put
$$
\Phi \defAs \{ \varphi_{i,k_j} : 1 \leq j  \leq \ell_i \hbox{ and }
 1 \leq i \leq r  \}.
$$
Then, for each $\varphi \in \Phi$ of the form $(\forall z_1) \ldots
(\forall z_n) \varphi_0$ having $Z_1^{1},\ldots,Z_m^{1}$ as arguments,
and for each ordered $m$-tuple $(X_{h_1}^{1},\ldots,X_{h_m}^{1})$ of
variables in ${\cal V}_1'\cup {\cal V}_1^F$, if
$M({\varphi_{0}})_{X_{h_1}^{1},\ldots,X_{h_m}^{1}}^{Z_1^{1}\;\,,\ldots,\;Z_m^{1}}
= \hbox{\bf false}$ we insert in $D^*$ elements $u_1,\ldots,u_n \in D$
such that
$$M[z_1/u_1,\ldots,z_n/u_n]
(\varphi_{0})_{X_{h_1}^{1},\ldots,X_{h_m}^{1}}^{Z_1^{1}\;\,,\ldots,\;Z_m^{1}}
=\false\,,$$ otherwise we leave $D^*$ unchanged.
\end{description}

Next, we calculate a bound to the size of $D^*$.  Since
$|\mathcal{F}_1| \leq |S| - 1 \leq |{\cal V}_2'|-1$ and
$|\mathcal{F}_2|\leq 3|{\cal V}_2'|$ (cf.\ Step 1 above), we plainly
have $|\mathcal{F}|\leq 4|{\cal V}_2'| - 1$.  Analogously, just after
Step 3, we have $|\Delta|\leq 4(|{\cal V}_1'| + (4|{\cal V}_2'| -1))
-1$ and $|D^*| \leq |{\cal V}_0'| + 4|{\cal V}_1'| + 16|{\cal V}_2'|
-5$.  Finally, after Step 4, if we let $L_{m}$ denote the maximal
length of the quantifier prefix of any purely universal formula of
level 2 occurring in $\psi$, and $L_{n}$ denote the maximal length of
the quantifier prefix of $\varphi \equiv (\forall z_1)\ldots(\forall
z_n)\varphi_0$, with $\varphi$ ranging in $\Phi$, then we have
\begin{equation}
    \label{DStar}
    |D^*| \leq |{\cal V}_0'| + 4|{\cal V}_1'| + 16|{\cal V}_2'| +
    \left((|{\cal V}_1'| + 4|{\cal V}_2'| - 1)^{L_{m}} L_{n}\right)|\Phi| - 5\,.
\end{equation}
Thus, it turns out that, in general, the domain $D^*$ (of the small
model) is exponential in the size of the input formula $\psi$.

\subsubsection{Relativized interpretations}
\label{machinery}
We introduce now the notion of \emph{relativized interpretation},
whose domain is the set $D^{*}$ constructed above, to define, out of
the model $\model = (D,M)$ of our normalized $\QLQSR$-conjunction
$\psi$, a finite interpretation $\model^* = (D^*,M^*)$ of bounded
size, which also satisfies $\psi$.

\begin{definition}\label{relintrp}
Let $\model=(D,M)$, $D^{*}$, ${\cal V}'_{1},{\cal V}_1^F$, and ${\cal
V}'_{2}$ be as above, and let $d^{*} \in D^{*}$.  The {\em
relativized\/} interpretation $\model^* =\Rel(\model, D^{*}, d^{*}, {\cal
V}'_{1}, {\cal V}_1^F, {\cal V}'_{2})$ of $\model$ with respect to
$D^{*}$, $d^{*}$, ${\cal V}'_{1}$, ${\cal V}_1^F$, and ${\cal V}'_{2}$
is the $\QLQSR$-interpretation $(D^{*},M^{*})$ such that
\begin{eqnarray*}
    M^{*}x & = & \left\{
    \begin{array}{ll}
        Mx\,, & \mbox{if $Mx \in D^{*}$}  \\
        d^{*}\,, & \mbox{otherwise}\, ,
    \end{array}
    \right.
    \\
    M^{*}X^1 & = & MX^1 \cap D^{*}\, ,
    \\
   M^{*}X^2  &=&
    \left((MX^2 \cap \pow(D^{*})) \setminus \{M^{*}X^1: X^1 \in
    ({\cal V}'_{1}\cup {\cal V}_1^F)\}\right)
    \\
     & & \qquad\cup \{M^{*}X^1: X^1 \in ({\cal V}'_{1} \cup {\cal V}_1^F),~MX^1 \in MX^2\}\, ,\\
M^{*}X^3  &=&
    \left((MX^3 \cap \pow(\pow(D^{*}))) \setminus \{M^{*}X^2: X^2 \in
    {\cal V}'_{2}\}\right)
     \\
     & & \qquad\cup \{M^{*}X^2: X^2 \in {\cal V}'_{2},~MX^2 \in MX^3\}\,.
\end{eqnarray*}
\end{definition}
Concerning $M^*X^2$ and $M^*X^3$, we observe that they have been
defined in such a way that all the membership relations between
variables of $\psi$ of sorts 2 and 3 are the same in both the
interpretations $\model$ and $\model^*$.  This fact will be proved
in the next section.

For ease of notation, we will often omit the reference to the element
$d^{*} \in D^{*}$ and write simply\/ $\Rel(\model, D^{*}, {\cal
V}'_{1}, {\cal V}_1^F, {\cal V}'_{2})$ in place of\/ $\Rel(\model,
D^{*}, d^{*}, {\cal V}'_{1}, {\cal V}_1^F, {\cal
V}'_{2})$, when $d^{*}$ is clear from the context.

The following useful properties are immediate consequences of the
construction of $D^*$, for any $x,y \in {\cal V}'_0$, $X^1,Y^1 \in
{\cal V}'_1$, and $X^2, Y^2 \in {\cal V}'_2$:
\begin{enumerate}[(A)]
    \item \label{prop1} if $MX^1 \neq MY^1$, then $(MX^1
    \bigtriangleup MY^1) \cap D^{*} \neq \emptyset$,\footnote{We
    recall that for any sets $s$ and $t$, $s \bigtriangleup t$ denotes
    the symmetric difference of $s$ and of $t$, namely the
    set $(s \setminus t)\cup(t \setminus s)$.}

    \item \label{prop2} if $MX^2 \neq MY^2$, there is a $J\in (MX^2
    \bigtriangleup MY^2)\cap \{MX^1 : X^1 \in ({\cal V}'_{1}\cup {\cal
    V}'_{F})\}$ such that $J \cap D^* \neq \emptyset$,

    \item \label{prop3} if $M\langle x,y \rangle \neq MX^2$, there is
    a $J \in (MX^2 \bigtriangleup M\langle x,y \rangle) \cap \{MX^1 :
    X^1 \in ({\cal V}'_{1}\cup {\cal V}'_{F})\}$ such that $J \cap D^*
    \neq \emptyset$, and if $J \in MX^2$, then $J \cap D^* \neq
    \{Mx\}$ and $J \cap D^* \neq \{Mx,My\}$.
\end{enumerate}

\subsection{Soundness of the relativization}
As above, let $\model=(D,M)$ be a $\QLQSR$-interpretation satisfying
our given normalized $\QLQSR$-conjunction $\psi$, and let $D^{*}$,
${\cal V}'_{1}$, ${\cal V}_1^F$, ${\cal V}'_{2}$, and $\model^{*}$ be
defined as before.  The main result of this section is Theorem
\ref{correctness} which states that if $\model$ satisfies $\psi$, then
$\model^*$ satisfies $\psi$ as well.  The proof of Theorem
\ref{correctness} exploits the technical Lemmas \ref{le_basic},
\ref{le_M*zMz*}, \ref{le_eqM*ZMZ*1}, \ref{le_eqM*ZMZ*2}, and
\ref{quantifiedform} below.  In particular, Lemma \ref{le_basic}
states that $\model$ satisfies a quantifier-free atomic formula
$\varphi$, fulfilling conditions (\ref{prop1}), (\ref{prop2}), and
(\ref{prop3}) above, if and only if $\model^*$ satisfies $\varphi$ too.
Lemmas \ref{le_M*zMz*}, \ref{le_eqM*ZMZ*1}, and \ref{le_eqM*ZMZ*2}
claim that suitably constructed variants of $\model^*$ and the small
models resulting by applying the construction of Section
\ref{decisionproc} to the corresponding variants of $\model$ can be
considered identical.
Finally, Lemma \ref{quantifiedform}, which follows from Lemmas
\ref{le_basic}, \ref{le_M*zMz*}, \ref{le_eqM*ZMZ*1}, and
\ref{le_eqM*ZMZ*2}, states that $\model^*$ satisfies all quantified
conjuncts of $\psi$ which are satisfied by $\model$.

\begin{lemma}
\label{le_basic} The following statements hold:
\begin{enumerate}[(a)]
    \item \label{le1} $\model^{*} \models x = y$ iff $\model \models x
    = y$, for all $x,y \in {\cal V}_{0}$ such that $Mx, My \in D^{*}$;

    \item \label{le2} $\model^{*} \models x \in X^1$ iff $\model
    \models x \in X^1$, for all $X^1 \in {\cal V}_{1}$ and $x \in
    {\cal V}_{0}$ such that $Mx \in D^{*}$;

    \item \label{le3} $\model^{*} \models X^1 = Y^1$ iff $\model
    \models X^1 = Y^1$, for all $X^1,Y^1 \in {\cal V}_{1}$ such that
    condition (\ref{prop1}) holds;

    \item \label{le4} $\model^{*} \models X^1 \in X^2$ iff $\model
    \models X^1 \in X^2$, for all $X^1 \in ({\cal V}'_{1}\cup {\cal
    V}'_{F})$, $X^2 \in {\cal V}_{2}$;

    \item \label{le5} $\model^{*} \models X^2 = Y^2$ iff $\model
    \models X^2 = Y^2$, for all $X^2,Y^2 \in {\cal V}_{2}$ such that
    condition (\ref{prop2}) holds;

    \item \label{le6} $\model^{*} \models \langle x,y \rangle = X^2$
    iff $\model \models \langle x,y \rangle = X^2$, for all $x,y \in
    {\cal V}_{0}$ such that $Mx,My \in D^*$ and $X^2 \in {\cal V}_{2}$
    such that condition (\ref{prop3}) holds;

    \item \label{le7} $\model^{*} \models \langle x,y \rangle \in X^3$
    iff $\model \models \langle x,y \rangle \in X^3$, for all $x,y \in
    {\cal V}_{0}$ such that $Mx,My \in D^*$ and $X^2 \in {\cal V}_{2}$
    such that condition (\ref{prop3}) holds;

    \item \label{le8} $\model^{*} \models X^2 \in X^3$ iff $\model
    \models X^2 \in X^3$, for all $x,y \in {\cal V}_{0}$ such that
    $Mx,My \in D^*$ and $X^2 \in {\cal V}_{2}$ such that conditions
    (\ref{prop2}) and (\ref{prop3}) hold.
\end{enumerate}
\end{lemma}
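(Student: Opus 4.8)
The plan is to prove the eight equivalences (\ref{le1})--(\ref{le8}) \emph{in the stated order}, since each later item reuses the earlier ones, arguing by a case analysis driven directly by Definition~\ref{relintrp} and the three structural properties (\ref{prop1}), (\ref{prop2}), (\ref{prop3}) of $D^{*}$. Items (\ref{le1}) and (\ref{le2}) are immediate: when $Mx,My \in D^{*}$ we have $M^{*}x = Mx$ and $M^{*}y = My$, and since $M^{*}X^{1} = MX^{1}\cap D^{*}$, the membership $M^{*}x \in M^{*}X^{1}$ reduces to $Mx \in MX^{1}$. For (\ref{le3}) the forward implication is trivial, while for the converse I would pick, via property (\ref{prop1}), an element of $(MX^{1}\bigtriangleup MY^{1})\cap D^{*}$ and note that it lies in exactly one of $M^{*}X^{1},M^{*}Y^{1}$.

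Item (\ref{le4}) is the first to exploit the \emph{subtract-and-re-add} shape of $M^{*}X^{2}$. If $MX^{1}\in MX^{2}$ then, $X^{1}$ being free, $M^{*}X^{1}$ sits in the re-added family and so $M^{*}X^{1}\in M^{*}X^{2}$; if $MX^{1}\notin MX^{2}$ then $M^{*}X^{1}$ is deleted by the subtraction, and the only way it could re-enter is as $M^{*}Y^{1}$ for a free $Y^{1}$ with $MY^{1}\in MX^{2}$ and $M^{*}Y^{1}=M^{*}X^{1}$, which would force $MY^{1}\neq MX^{1}$ together with $MY^{1}\cap D^{*}=MX^{1}\cap D^{*}$, contradicting (\ref{prop1}). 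Item (\ref{le5}) follows from (\ref{le4}): the forward direction is trivial, and for the converse property (\ref{prop2}) supplies a set $J=MX^{1}$ (with $X^{1}$ free) lying in exactly one of $MX^{2},MY^{2}$, so (\ref{le4}) transfers this asymmetry to $M^{*}X^{1}$, separating $M^{*}X^{2}$ from $M^{*}Y^{2}$.

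The pairing items (\ref{le6}) and (\ref{le7}) are the crux, and I expect them to be the main obstacle. The key preliminary is that $Mx,My \in D^{*}$ give $M^{*}\langle x,y\rangle = \{\{M^{*}x\},\{M^{*}x,M^{*}y\}\} = M\langle x,y\rangle$, so the Kuratowski pair survives relativization intact. For the forward direction of (\ref{le6}) I must check that relativizing $MX^{2}=M\langle x,y\rangle$ leaves it unchanged: re-added elements trivially fall back inside $MX^{2}$, and nothing is lost because $\mathcal{F}_{2}$ was chosen in Step~1 so that the at most two members $\{Mx\},\{Mx,My\}$ of $MX^{2}$ already belong to $\mathcal{F}$, hence are values of free sort-$1$ variables and get re-added. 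For the converse I would use (\ref{prop3}): a witness $J=MX^{1}$ in $MX^{2}\bigtriangleup M\langle x,y\rangle$ produces, through (\ref{le4}), an element of $M^{*}X^{2}\bigtriangleup M^{*}\langle x,y\rangle$; here the sharpened clauses of (\ref{prop3}), that $J\cap D^{*}\neq\{Mx\}$ and $J\cap D^{*}\neq\{Mx,My\}$ whenever $J\in MX^{2}$, are precisely what prevents the witness from accidentally re-entering $M^{*}\langle x,y\rangle$. Item (\ref{le7}) then follows by feeding (\ref{le6}) into the subtract-and-re-add analysis of $M^{*}X^{3}$ one level higher: if $M\langle x,y\rangle$ were lost from, or spuriously placed into, $M^{*}X^{3}$, then some free $X^{2}$ with $M^{*}X^{2}=M\langle x,y\rangle$ would by (\ref{le6}) satisfy $MX^{2}=M\langle x,y\rangle$, yielding a contradiction with $M\langle x,y\rangle \in MX^{3}$ or $M\langle x,y\rangle\notin MX^{3}$ as appropriate.

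Finally, (\ref{le8}) mirrors the passage from (\ref{le4}) to (\ref{le5}) lifted one sort upward: the forward direction uses that a free $X^{2}$ with $MX^{2}\in MX^{3}$ is re-added into $M^{*}X^{3}$, and the converse rules out a spurious re-addition by invoking (\ref{le5}) under (\ref{prop2}), so that $M^{*}X^{2}=M^{*}Y^{2}$ forces $MX^{2}=MY^{2}$. The genuinely delicate point throughout is the bookkeeping for the pair terms in (\ref{le6}) and (\ref{le7}): only the combination of the richness of $D^{*}$ guaranteed by $\mathcal{F}_{2}$ and the sharpened conclusion of (\ref{prop3}) keeps the encoded ordered pairs from being corrupted when the sort-$2$ and sort-$3$ variables are relativized.
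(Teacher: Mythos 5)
Your proof follows essentially the same route as the paper's: items (a)--(e) are handled identically, (f) and (g) rest on property (C) exactly as in the paper (and your remark that $\mathcal{F}_{2}$ guarantees the members of $MX^{2}$ are values of free sort-$1$ variables correctly fills in the forward direction of (f), which the paper glosses over). The only small gap is in (h), where besides the case $M^{*}X^{2}=M^{*}Y^{2}$ handled via (e) one must also rule out $M^{*}X^{2}=M^{*}\langle x,y\rangle$ for a pair term via (f) --- which is why the statement assumes both conditions (B) and (C) --- but this is a routine addition to your argument.
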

\begin{proof}
\begin{itemize}
\item [(\ref{le1})] Let $x,y \in {\cal V}_{0}$ be such that $Mx, My
\in D^{*}$.  Then $M^{*}x = Mx$ and $M^{*}y = My$, so we have
immediately that ${\cal M}^{*} \models x = y$ ~iff~ ${\cal M} \models x
= y$.

\item [(\ref{le2})] Let $X^1 \in {\cal V}_{1}$ and let $x \in {\cal
V}_{0}$ be such that $Mx \in D^{*}$.  Then $M^{*}x = Mx$, so that
$M^{*}x \in M^{*}X^1$ ~iff~ $Mx \in MX^1 \cap D^{*}$ iff $Mx \in MX^1$.

\item [(\ref{le3})] If $MX^1 = MY^1$, then plainly $M^{*}X^1 =
M^{*}Y^1$.  On the other hand, if $MX^1 \neq MY^1$, then, by condition
(\ref{prop1}), $(MX^1 \bigtriangleup MY^1) \cap D^{*} \neq \emptyset$ and thus
$M^{*}X^1 \neq M^{*}Y^1$.

\item [(\ref{le4})] If $MX^1 \in MX^2$, then $M^*X^1 \in M^*X^2$.  On
the other hand, suppose by contradiction that $MX^1 \notin MX^2$ and
$M^*X^1 \in M^*X^2$.  Then, there must necessarily be a $Z^1 \in
({\cal V}_{1}' \cup {\cal V}_{1}^{F})$ such that $MZ^1 \in MX^2$,
$MZ^1 \neq MX^1$, and $M^* X^1 = M^* Z^1$.  Since $MZ^1 \neq MX^1$ and
$(MZ^1 \bigtriangleup MX^1) \cap D^{*} \neq \emptyset$, by condition
(\ref{prop1}), we have $M^* X^1 \neq M^* Z^1$, which is a
contradiction.

\item [(\ref{le5})] If $MX^2 = MY^2$, then $M^* X^2 = M^* Y^2$.  On
the other hand, if $MX^2 \neq MY^2$, by condition (\ref{prop2}), there
is a $J \in (MX^2 \bigtriangleup MY^2) \cap \{MX^1 : X^1 \in ({\cal
V}_{1}'\cup {\cal V}_{1}^{F})\}$ such that $J \cap D^* \neq
\emptyset$.  Let $J = MX^1$, for some $X^1 \in ({\cal V}_{1}'\cup
{\cal V}_{1}^{F})$, and suppose without loss of generality that $MX^1
\in MX^2$ and $MX^1 \notin MY^2$.  Then, by (\ref{le4}), $M^*X^1 \in
M^* X^2$ and $M^* X^1 \notin M^* Y^2$, and hence $M^* X^2 \neq M^*
Y^2$.

\item [(\ref{le6})] If $M\langle x,y \rangle = MX^2$, then $M^*
\langle x,y \rangle = M^* X^2$.  If $M\langle x,y \rangle \neq MX^2$,
then there is a $J \in (MX^2 \bigtriangleup M\langle x,y \rangle) \cap
\{MX^1 : X^1 \in ({\cal V}_{1}' \cup {\cal V}_{1}^{F})\}$ satisfying
the constraints of condition (\ref{prop3}).  Let $J = MX^1$, for some
$X^1 \in ({\cal V}_{1}'\cup {\cal V}_{1}^{F})$, and suppose that $MX^1
\in MX^2$ and $MX^1 \notin M\langle x,y\rangle$.  Then $M^*X^1 \in
M^*X^2$ and since $M^*X^1 \neq \{Mx\}$ and $M^*X^1 \neq \{Mx,My\}$, it
follows that $M^*X^1 \notin M^*\langle x,y \rangle$.  On the other
hand, if $MX^1 \in M\langle x,y\rangle$ and $MX^1 \notin MX^2$, then
either $MX^1 = \{Mx\}$ or $MX^1 = \{Mx,My\}$.  In both cases $MX^1 =
M^*X^1$ and thus if $MX^1 \notin MX^2$, it plainly follows that
$M^*X^1 \notin M^*X^2$.

\item [(\ref{le7})] Let $x,y \in {\cal V}_{0}$ and $X^3 \in {\cal
V}_{3}$ be such that $M\langle x,y\rangle \in MX^3$.  Then $M^*
\langle x,y\rangle \in M^*X^3$.  On the other hand, suppose by
contradiction that $M\langle x,y\rangle \notin MX^3$ and $M^* \langle
x,y\rangle \in M^*X^3$.  Then, there must be an $X^2 \in {\cal
V}_{2}'$ such that $M^*X^2 \in M^* X^3$, $M^* X^2 = M^*\langle
x,y\rangle$, and $MX^2 \neq M\langle x,y\rangle$.  But this is
impossible by (\ref{le6}).

\item [(\ref{le8})] If $MX^2 \in MX^3$ then $M^*X^2 \in M^*X^3$.  Now
suppose by contradiction that $MX^2 \notin MX^3$ and that $M^*X^2 \in
M^*X^3$.  Then, either there is a $Y^2 \in {\cal V}_2'$ such that
$MX^2 \neq MY^2$ and $M^*X^2 = M^*Y^2$, which is impossible by
(\ref{le5}), or there is a $\langle x,y\rangle$, with $x,y \in {\cal
V}_{0}$, $Mx,My \in D^*$, such that $MX^2 \neq M\langle x,y\rangle$
and $M^*X^2 = M^*\langle x,y\rangle$, but this is absurd by
(\ref{le6}).
\end{itemize}
\end{proof}

In view of the next technical lemmas, we introduce the following
notations.  Let $u_1,\ldots,u_n\in D^*$,
$U_{1}^{1},\ldots,U_{m}^{1}\in \pow(D^{*})$, and
$U_{1}^{2},\ldots,U_{p}^{2}\in \pow(\pow(D^{*}))$.  Then we put
\begin{eqnarray*}
    \model^{*,z} & = & \model^{*}[z_{1}/u_{1},\ldots,z_{n}/u_{n}], \\
    \model^{*,Z^1} & = & \model^{*}[Z_{1}^{1}/U_{1}^{1},\ldots,Z_{m}^{1}/U_{m}^{1}], \\
    \model^{*,Z^2} & = & \model^{*}[Z_{1}^{2}/U_{1}^{2},\ldots,Z_{p}^{2}/U_{p}^{2}],
\end{eqnarray*}
and also
\begin{eqnarray*}
    \model^{z,*} & = & \Rel(\model^{z},D^{*}, {\cal V}'_{1}, {\cal V}_{1}^{F}, {\cal V}'_{2}),\\
    \model^{Z^1,*} & = & \Rel(\model^{Z^{1}},D^{*},{\cal V}'_{1} \cup
                         \{Z_{1}^{1},\ldots,Z_{m}^{1}\}, {\cal V}_{1}^{F}, {\cal V}'_{2}),\\
    \model^{Z^2,*} & = & \Rel(\model^{Z^{2}},D^{*},\mathcal{F}^{*},{\cal V}'_{1}, {\cal V}_{1}^{F}, {\cal V}'_{2}\cup
                         \{Z_{1}^{2},\ldots,Z_{p}^{2}\}).
\end{eqnarray*}
The next three lemmas claim that, under certain conditions, the
following pairs of $\QLQSR$-interpretations $\model^{*,z}$ and
$\model^{z,*}$, $\model^{*,Z^1}$ and $\model^{Z^1,*}$,
$\model^{*,Z^2}$ and $\model^{Z^2,*}$ can be
identified, respectively.

\begin{lemma}
\label{le_M*zMz*} Let $u_{1},\ldots,u_{n} \in D^{*}$, and let
$z_{1},\ldots,z_{n} \in {\cal V}_{0}$. Then, the $\QLQSR$-interpretations $\model^{*,z}$
and $\model^{z,*}$ coincide.

\end{lemma}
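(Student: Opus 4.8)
The plan is to show that the two $\QLQSR$-interpretations $\model^{*,z}$ and $\model^{z,*}$ assign the very same value to every variable; since by construction both have the identical domain $D^*$, this will establish that they coincide as interpretations. The guiding principle is that reassigning the sort-0 variables $z_1,\ldots,z_n$ to elements $u_1,\ldots,u_n$ that \emph{already} lie in $D^*$ commutes with the relativization operator $\Rel$, so the order in which the two operations are applied is immaterial. Accordingly I would proceed sort by sort, unwinding Definition \ref{relintrp} in each case.

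First I would treat the sort-0 variables. For a variable $z_i$ among $z_1,\ldots,z_n$, we have $M^{*,z}z_i = u_i$ directly from the reassignment, whereas in $\model^{z,*}$ one first sets $M^z z_i = u_i$ and then relativizes; since $u_i \in D^*$ by hypothesis, the relativization clause leaves $u_i$ unchanged, so $M^{z,*}z_i = u_i$ as well. For any other sort-0 variable $x$, the reassignment does not affect its value, and both sides reduce to the relativized value of $Mx$ (that is, $Mx$ if $Mx \in D^*$, and $d^*$ otherwise). This sort-0 case is where the hypothesis $u_i \in D^*$ is genuinely used, and it is really the heart of the lemma.

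Next I would handle sorts 1, 2, and 3, exploiting the fact that the sort-0 reassignment $[z_1/u_1,\ldots,z_n/u_n]$ leaves the values of all higher-sort variables untouched, so that $M^z$ agrees with $M$ on every $X^1$, $X^2$, and $X^3$. For sort 1, both $M^{*,z}X^1$ and $M^{z,*}X^1$ therefore collapse to $MX^1 \cap D^*$. For sort 2, I would substitute the already-established identity $M^{z,*}X^1 = M^*X^1$, together with $M^z X^1 = MX^1$ and $M^z X^2 = MX^2$, into the defining formula for $M^{z,*}X^2$ and verify that it reduces verbatim to the defining formula for $M^*X^2 = M^{*,z}X^2$. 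The sort-3 case is entirely analogous, now invoking $M^{z,*}X^2 = M^*X^2$ in place of the sort-1 identity and using $M^z X^3 = MX^3$.

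I do not anticipate any real obstacle here: the proof is a direct unwinding of Definition \ref{relintrp}, and the only delicate point—that the relativization of $u_i$ equals $u_i$ itself—is guaranteed by the assumption $u_i \in D^*$. The modest bookkeeping lies in keeping straight the order in which reassignment and relativization are applied, and in confirming that each defining clause for the higher sorts is invariant under the replacement of $M$ by $M^z$.
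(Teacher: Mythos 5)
Your proposal is correct and follows essentially the same route as the paper: a sort-by-sort unwinding of Definition \ref{relintrp}, using $u_i \in D^*$ to settle sort 0 and the fact that the reassignment $[z_1/u_1,\ldots,z_n/u_n]$ leaves all higher-sort values of $M$ unchanged to collapse the defining clauses for sorts 1, 2, and 3. Your treatment of the sort-0 case is in fact slightly more explicit than the paper's, which simply notes that the claim is immediate from $u_1,\ldots,u_n \in D^*$.
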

\begin{proof}
The proof of the lemma is carried out by showing that $\model^{*,z}$ and $\model^{z,*}$ agree over variables of all sorts.
\begin{itemize}
\item Let $x \in {\cal V}_{0}$. Since $u_{1},\ldots,u_{n} \in D^{*}$, the thesis  follows immediately.

\item Let $X^1 \in {\cal V}_{1}$, then $M^{*,z}X^1 = M^{*}X^1 =
MX^1 \cap D^{*} = M^{z}X^1 \cap D^{*} = M^{z,*}X^1$.

\item Let $X^2 \in {\cal V}_{2}$, then we have the following equalities:
\begin{eqnarray*}
        M^{*,z}X^2 & = & M^{*}X^2 =
                        ((MX^2 \cap \pow(D^{*})) \setminus \{M^{*}X^1: X^1 \in ({\cal V}'_{1}\cup {\cal V}_{1}^{F})\})\\
                 &   &  \phantom{M^{*}X^2 =} \qquad \cup \{M^{*}X^1: X^1 \in ({\cal V}'_{1}\cup {\cal V}_{1}^{F}),~MX^1 \in MX^2\} \,  \\
                 &  & \phantom{M^{*}X^2} = ((M^{z}X^2 \cap \pow(D^*)) \setminus \{M^{z,*}X^1 : X^1 \in ({\cal V}_{1}'\cup {\cal V}_{1}^{F})\}) \\
                 &   &  \phantom{M^{*}X^2 =} \qquad \cup \{M^{z,*}X^1 : X^1 \in ({\cal V}_{1}'\cup {\cal V}_{1}^{F}), M^{z}X^1 \in M^{z}X^2\} \\
                 &  & \phantom{M^{*}X^2} =  M^{z,*}X^2 \,.
\end{eqnarray*}
\item Let $X^3 \in {\cal V}_{3}$, then the following holds:
\begin{eqnarray*}
        M^{*,z}X^3 & = & M^{*}X^3 =
                        ((MX^3 \cap \pow(\pow(D^*))) \setminus \{M^{*}X^2 : X^2 \in {\cal V}_{2}'\})\\
                 &   &  \phantom{M^{*}X^2 =} \qquad\cup \{M^{*}X^2 : X^2 \in {\cal V}_{2}', MX^2 \in MX^3\}  \\
                 &  & \phantom{M^{*}X^2} = ((M^{z}X^3 \cap \pow(\pow(D^*))) \setminus \{M^{z,*}X^2 : X^2 \in {\cal V}_{2}'\}) \\
                 &   &  \phantom{M^{*}X^2 =} \qquad\cup \{M^{z,*}X^2 : X^2 \in {\cal V}_{2}', M^{z}X^2 \in M^{z}X^3\} \\
                 &  & \phantom{M^{*}X^2} =  M^{z,*}X^3 \,.
\end{eqnarray*}
\end{itemize}
\end{proof}

\begin{lemma}
\label{le_eqM*ZMZ*1}
Let $Z_{1}^1,\ldots,Z_{m}^1 \in {\cal V}_{1} \setminus ({\cal V}'_{1}
\cup {\cal V}_{1}^{F})$ and $U_{1}^1,\ldots,U_{m}^1 \in \pow(D^*)
\setminus \{M^{*}X^1 : X^1 \in ({\cal V}'_{1} \cup {\cal
V}_{1}^{F})\}$.  Then, the $\QLQSR$-interpretations $\model^{*,Z^1}$
and $\model^{Z^1,*}$ coincide.
\end{lemma}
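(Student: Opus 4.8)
The plan is to mimic the proof of Lemma~\ref{le_M*zMz*}, establishing that the two interpretations $\model^{*,Z^1}$ and $\model^{Z^1,*}$ assign the same value to every variable of every sort. Writing $M^{*,Z^1}$ and $M^{Z^1,*}$ for the respective assignments, I would treat the four sorts in turn, exploiting the two hypotheses on the $U_j^1$'s: that $U_j^1 \in \pow(D^*)$ (so $U_j^1 \subseteq D^*$), and that $U_j^1 \notin \{M^* X^1 : X^1 \in ({\cal V}'_1 \cup {\cal V}_1^F)\}$.

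For sort~$0$, reassigning variables of sort~$1$ leaves the sort-$0$ component untouched on both sides, so the equality reduces to $M^{*} x = M^{Z^1,*} x$, which is immediate. For sort~$1$, if $X^1 \notin \{Z_1^1,\ldots,Z_m^1\}$ then both assignments return $MX^1 \cap D^*$, while for $X^1 = Z_j^1$ one has $M^{*,Z^1} Z_j^1 = U_j^1$ and $M^{Z^1,*} Z_j^1 = U_j^1 \cap D^* = U_j^1$, the last equality holding because $U_j^1 \subseteq D^*$.

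The crux is sort~$2$. Here I would expand $M^{Z^1,*} X^2$ according to Definition~\ref{relintrp}, noting that in $\model^{Z^1,*}$ the relevant collection of free sort-$1$ variables has been enlarged to $({\cal V}'_1 \cup \{Z_1^1,\ldots,Z_m^1\}) \cup {\cal V}_1^F$ and that $M^{Z^1} X^2 = MX^2$. Relative to the formula for $M^{*,Z^1} X^2 = M^* X^2$, this introduces exactly the elements $\{U_1^1,\ldots,U_m^1\}$ into the subtracted set and the elements $\{U_j^1 : U_j^1 \in MX^2\}$ into the added set. I would then argue that these two modifications cancel. Since each $U_j^1 \subseteq D^*$, we have $U_j^1 \in MX^2 \cap \pow(D^*)$ precisely when $U_j^1 \in MX^2$; hence every $U_j^1$ that is removed and actually lies in $MX^2$ is immediately restored, whereas a $U_j^1 \notin MX^2$ is absent from the base set $MX^2 \cap \pow(D^*)$ to begin with. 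The second hypothesis, $U_j^1 \notin \{M^* X^1 : X^1 \in ({\cal V}'_1 \cup {\cal V}_1^F)\}$, guarantees that subtracting the $U_j^1$'s removes none of the genuine images $M^* X^1$, so no term other than the $U_j^1$'s themselves is disturbed. A short set-theoretic computation then yields $M^{Z^1,*} X^2 = M^* X^2 = M^{*,Z^1} X^2$.

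Finally, sort~$3$ follows routinely from the sort-$2$ case: the sort-$3$ clause of Definition~\ref{relintrp} refers only to the unchanged collection ${\cal V}'_2$ and to the sort-$2$ images, and since $M^{Z^1,*} X^2 = M^* X^2$ for every $X^2 \in {\cal V}'_2$ while $M^{Z^1} X^3 = MX^3$, expanding both sides gives $M^{Z^1,*} X^3 = M^* X^3 = M^{*,Z^1} X^3$. I expect the sort-$2$ step to be the main obstacle, as it is the only place where real work occurs: the two hypotheses on the $U_j^1$'s are precisely what make the extra subtracted and added terms annihilate one another, and the whole point of the lemma is that enlarging the free sort-$1$ parameter by the $Z_j^1$'s has no net effect provided the new values are subsets of $D^*$ disjoint from the existing relativized images.
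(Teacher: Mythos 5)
Your proposal is correct and follows essentially the same route as the paper: both arguments check agreement sort by sort, with the sort-$1$ case split on whether $X^1$ is one of the $Z_j^1$, and the sort-$2$ case resolved by showing that the extra subtracted set $\{U_1^1,\ldots,U_m^1\}$ and the extra added set $\{U_j^1 : U_j^1 \in MX^2\}$ cancel, using exactly the two hypotheses ($U_j^1 \subseteq D^*$ and $U_j^1 \notin \{M^*X^1 : X^1 \in {\cal V}'_1 \cup {\cal V}_1^F\}$). The paper merely makes your ``short set-theoretic computation'' explicit via auxiliary sets $P_1,\ldots,P_5$ and the identities $P_2 \cap P_3 = \emptyset$, $P_5 = P_1 \cap P_3$, $P_4 \subseteq P_2$.
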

\begin{proof}
We prove the lemma by showing that $\model^{*,Z^1}$ and $\model^{Z^1,*}$ agree over variables of all sorts.

\begin{enumerate}
\item  Clearly $M^{*,Z^1}x = M^{*}x = M^{Z^1,*}x$, for
    all individual variables $x \in {\cal V}_{0}$.

\item
    Let $X^1 \in {\cal V}_{1}$.
    If $X^1 \notin \{Z_{1}^1,\ldots,Z_{m}^1\}$, then
    \[
    M^{Z^1,*}X^1 = M^{Z^1}X^1 \cap D^{*} = MX^1 \cap D^{*} = M^{*}X^1 = M^{*,Z^1}X^1\,.
    \]
    On the other hand, if $X^1 = Z_j^1$ for some $j \in \{1,\ldots, m\}$, we have
    \[
    M^{Z^1,*}Z_{j}^1 = M^{Z^1}Z_{j}^1 \cap D^{*} = U_{j}^1 \cap D^{*} = U_{j}^1
      =  M^{*,Z^1}Z_{j}^1\,.
    \]
\item
    Let $X^2 \in {\cal V}_{2}$.
    Then we have
    \begin{eqnarray}
        M^{*,Z^1}X^2 & = & M^{*}X^2 =
                        ((MX^2 \cap \pow(D^{*})) \setminus \{M^{*}X^1:
			X^1 \in ({\cal V}'_{1}\cup {\cal
			V}_{1}^{F})\}) \notag\\
                 &   &  \phantom{M^{*}A =} \qquad\cup \{M^{*}X^1: X^1
         \in ({\cal V}'_{1}\cup {\cal V}_{1}^{F}),~MX^1 \in
         MX^2\} \, , \label{a1}\\[.2cm]
    M^{Z^1,*}X^2 & = & ((M^{Z^1}X^2 \cap \pow(D^{*})) \setminus
                        \{M^{Z^1,*}X^1: X^1 \in (({\cal V}'_{1}\cup {\cal V}_{1}^{F}) \cup
                             \{Z_{1}^1,\ldots,Z_{m}^1\})\}) \notag\\
                 &   &  \qquad\cup \{M^{Z^1,*}X^1: X^1 \in (({\cal V}'_{1}\cup {\cal V}_{1}^{F}) \cup
                             \{Z_{1}^1,\ldots,Z_{m}^1\}),~M^{Z^1}X^1 \in M^{Z^1}X^2\} \notag\\
                 & = & ((MX^2 \cap \pow(D^{*})) \setminus
                        (\{M^{*}X^1: X^1 \in ({\cal V}'_{1}\cup {\cal V}_{1}^{F})\}  \cup
                        \{U_{j}: j = 1,\ldots,m \})) \notag\\
                 &   &  \qquad\cup (\{M^{*}X^1: X^1 \in ({\cal V}'_{1}\cup {\cal V}_{1}^{F}),~
		 MX^1 \in MX^2\}\notag\\
                 &   &  \qquad\cup (\{U_{j}: j = 1,\ldots,m\} \cap
		 MX^2))\,.\label{a2}
    \end{eqnarray}

    By putting
    \[
    \begin{array}{lll}
        P_{1} & = & MX^2 \cap \pow(D^{*}) ,  \\
        P_{2} & = & \{M^{*}X^1: X^1 \in ({\cal V}'_{1}\cup {\cal V}_{1}^{F})\} , \\
        P_{3} & = & \{U_{j}: j = 1,\ldots,m \},  \\
        P_{4} & = & \{M^{*}X^1: X^1 \in ({\cal V}'_{1}\cup {\cal V}_{1}^{F}),~MX^1 \in MX^2\},  \\
        P_{5} & = & \{U_{j}: j = 1,\ldots,m\} \cap MX^2,
    \end{array}
    \]
    then by (\ref{a1}) and (\ref{a2}) can be rewritten as
    \begin{eqnarray}
        M^{*,Z^1}X^2 & = & (P_{1} \setminus P_{2}) \cup P_4 \label{a3}\\
        M^{Z^1,*}X^2 & = & (P_{1} \setminus (P_{2} \cup P_3))
                       \cup P_4 \cup P_5\,.\label{a4}
    \end{eqnarray}
    Moreover, since, as can easily verified, we have
    \[
        P_{2} \cap P_{3}  =  \emptyset\,,~~~
        P_{5}  =  P_{1} \cap P_{3}\,,~~~ \text{and~~~}
        P_{4} \subseteq  P_{2}\,,
    \]
    then
    \begin{eqnarray*}
   (P_{1} \setminus P_{2}) \cup P_4 &=&
    (P_{1} \setminus (P_{2} \cup P_3))
     \cup P_4 \cup (P_{1} \cap P_{3}) \\
    &=&  (P_{1} \setminus (P_{2} \cup P_3))
     \cup P_4 \cup P_5\,.
     \end{eqnarray*}
    Therefore, (\ref{a3}) and (\ref{a4}) readily imply $M^{*,Z^1}X^2 = M^{Z^1,*}X^2$.
\item
    Let $X^3 \in {\cal V}_{3}$, then $M^{*,Z^1} X^3 = M^*[Z_1^1/U_1^1,\ldots,Z_m^1/U_m^1]X^3 = M^* X^3$ and
    \begin{eqnarray*}
        M^{Z^1,*} X^3 & = & ((M^{Z^1} X^3 \cap \pow(\pow(D^*))) \setminus \{M^{Z^1,*}X^2 : X^2 \in {\cal V}'_{2}\})\\
                 &   &  \qquad\cup \{M^{Z^1,*}X^2 : X^2 \in {\cal V}'_{2}, M^{Z^1}X^2 \in M^{Z^1}X^3\} \\
                 & = & ((MX^3 \cap \pow(\pow(D^*))) \setminus \{M^{*}X^2 : X^2 \in {\cal V}'_{2}\}) \\
                 &   &  \qquad\cup \{M^{*}X^2 : X^2 \in {\cal V}'_{2}, MX^2 \in MX^3\}\\
                 & = & M^{*}X^3 \,.
    \end{eqnarray*}
    Since $M^{*,Z^1}X^3 = M^{Z^1,*}X^3$, the thesis follows.
\end{enumerate}
\end{proof}

\begin{lemma}
\label{le_eqM*ZMZ*2}
Let $Z_{1}^2,\ldots,Z_{p}^2 \in {\cal V}_{2} \setminus {\cal V}'_{2}$
and $U_{1}^2,\ldots,U_{p}^2 \in \pow(\pow(D^*)) \setminus \{M^{*}X^2 :
X^2 \in {\cal V}'_{2}\}$.  Then the $\QLQSR$-interpretations
$\model^{*,Z^2}$ and $\model^{Z^2,*}$ coincide.
\end{lemma}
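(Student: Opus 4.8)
The plan is to follow the template of Lemmas~\ref{le_M*zMz*} and~\ref{le_eqM*ZMZ*1} and show that $\model^{*,Z^2}$ and $\model^{Z^2,*}$ assign the same value to every variable, treating the four sorts in turn. For sort-$0$ and sort-$1$ variables agreement is immediate: reassigning the sort-$2$ variables $Z_1^2,\ldots,Z_p^2$ leaves the sort-$0$ and sort-$1$ parts of the assignment untouched, and since $\model^{Z^2}$ coincides with $\model$ on those sorts we get $M^{*,Z^2}x = M^*x = M^{Z^2,*}x$ and $M^{*,Z^2}X^1 = M^*X^1 = MX^1 \cap D^* = M^{Z^2}X^1 \cap D^* = M^{Z^2,*}X^1$.

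For sort-$2$ variables I would argue by cases. If $X^2 \notin \{Z_1^2,\ldots,Z_p^2\}$ then $M^{Z^2}X^2 = MX^2$, and since the sort-$2$ clause of the relativization refers only to the unchanged sort-$1$ data ${\cal V}'_1 \cup {\cal V}_1^F$ it returns exactly $M^*X^2 = M^{*,Z^2}X^2$. If instead $X^2 = Z_j^2$, I must check that applying the sort-$2$ clause to $M^{Z^2}Z_j^2 = U_j^2$ returns $U_j^2$ itself; here one uses $U_j^2 \in \pow(\pow(D^*))$, whence $U_j^2 \cap \pow(D^*) = U_j^2$, and verifies that the members removed by subtracting $\{M^*X^1 : X^1 \in {\cal V}'_1 \cup {\cal V}_1^F\}$ are exactly those reinserted by the second term, so that the two operations cancel.

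The sort-$3$ case is the technical core and runs parallel to the sort-$2$ computation of Lemma~\ref{le_eqM*ZMZ*1}, one level higher. Exploiting $M^{Z^2}X^3 = MX^3$ and the sort-$2$ identities just obtained ($M^{Z^2,*}X^2 = M^*X^2$ for $X^2 \in {\cal V}'_2$ and $M^{Z^2,*}Z_j^2 = U_j^2$), I would expand both $M^{*,Z^2}X^3 = M^*X^3$ and $M^{Z^2,*}X^3$ and abbreviate
\[ Q_1 = MX^3 \cap \pow(\pow(D^*)),\quad Q_2 = \{M^*X^2 : X^2 \in {\cal V}'_2\},\quad Q_3 = \{U_j^2 : 1 \leq j \leq p\}, \]
\[ Q_4 = \{M^*X^2 : X^2 \in {\cal V}'_2,\ MX^2 \in MX^3\},\quad Q_5 = Q_3 \cap MX^3, \]
so that the two expansions reduce to $(Q_1 \setminus Q_2) \cup Q_4$ and $(Q_1 \setminus (Q_2 \cup Q_3)) \cup Q_4 \cup Q_5$, respectively. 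The relations $Q_2 \cap Q_3 = \emptyset$, $Q_5 = Q_1 \cap Q_3$, and $Q_4 \subseteq Q_2$ then collapse the second expression to the first, giving $M^{Z^2,*}X^3 = M^*X^3 = M^{*,Z^2}X^3$.

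I expect the main obstacle to be the sort-$3$ computation, and within it the disjointness $Q_2 \cap Q_3 = \emptyset$: this is exactly where the hypothesis $U_j^2 \in \pow(\pow(D^*)) \setminus \{M^*X^2 : X^2 \in {\cal V}'_2\}$ is indispensable, mirroring the role of the condition $U_j^1 \notin \{M^*X^1 : X^1 \in {\cal V}'_1 \cup {\cal V}_1^F\}$ in Lemma~\ref{le_eqM*ZMZ*1}. A secondary delicate point is the sort-$2$ verification that the relativization fixes $U_j^2$, where the removal and the re-insertion of the relativized sort-$1$ values must be shown to cancel exactly.
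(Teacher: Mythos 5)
Your proposal is correct and follows essentially the same route as the paper's own proof: a sort-by-sort check, with the sort-2 case split according to whether $X^2$ is one of the $Z_j^2$, and the sort-3 case reduced via the same five auxiliary sets (your $Q_1,\ldots,Q_5$ are the paper's $P_1,\ldots,P_5$) and the same three relations $Q_2\cap Q_3=\emptyset$, $Q_5=Q_1\cap Q_3$, $Q_4\subseteq Q_2$. You also correctly locate where the hypothesis $U_j^2\notin\{M^*X^2: X^2\in{\cal V}'_2\}$ is needed, namely the disjointness $Q_2\cap Q_3=\emptyset$, exactly as in the paper.
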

\begin{proof}
We show that $\model^{*,Z^2}$ and $\model^{Z^2,*}$
coincide by proving that they agree over variables of all sorts.
\begin{enumerate}
\item Plainly $M^{*,Z^2}x = M^{*}x = M^{Z^2,*}x$, for every $x \in
{\cal V}_0$.

\item Let $X^1 \in {\cal V}_1$. Then $M^{*,Z^2}X^1 = M^{*}X^1 =
M^{Z^2,*}X^1$.

\item  Let $X^2 \in {\cal V}_2$ such that $X^2 \notin
\{Z_1^2,\ldots,Z_p^2\}$. Then
\[M^{*,Z^2}X^2 = M^{*}[Z_1^2/U_1^2,\ldots,Z_p^2/U_p^2]X^2 = M^{*}X^2
\]
and
\begin{eqnarray*}
    M^{Z^2,*}X^2 & = & ((M^{Z^2} X^2 \cap \pow(D^*)) \setminus \{M^{Z^2,*}X^1 : X^1 \in ({\cal V}'_{1}\cup {\cal V}_{1}^{F})\})\\
	     &   &  \qquad\cup \{M^{Z^2,*}X^1 : X^1 \in ({\cal V}'_{1}\cup {\cal V}_{1}^{F}), M^{Z^2}X^1 \in M^{Z^2}X^2\} \\
	     & = & ((MX^2 \cap pow(D^*)) \setminus \{M^{*}X^1 : X^1 \in ({\cal V}'_{1}\cup {\cal V}_{1}^{F})\}) \\
	     &   &  \qquad\cup \{M^{*}X^1 : X^1 \in ({\cal V}'_{1}\cup {\cal V}_{1}^{F}), MX^1 \in MX^2\}\\
	     & = & M^{*}X^2 \,.
\end{eqnarray*}
Since $M^{*,Z^2}X^2 = M^{Z^2,*}X^2$ the thesis follows, at least in
the case in which $X^2 \notin \{Z_1^2,\ldots,Z_p^2\}$.
On the other
hand, if $X^2 \in \{Z_1^2,\ldots,Z_p^2\}$, say $X^2 = Z_j^2$, then
$M^{*,Z^2}X^2 = U_{j}^2$, and
\begin{eqnarray*}
    M^{Z^2,*}X^2 & = & ((M^{Z^2} X^2 \cap \pow(D^*)) \setminus \{M^{Z^2,*}X^1 : X^1 \in ({\cal V}'_{1}\cup {\cal V}_{1}^{F})\})\\
	     &   &  \qquad\cup \{M^{Z^2,*}X^1 : X^1 \in ({\cal V}'_{1}\cup {\cal V}_{1}^{F}), M^{Z^2}X^1 \in M^{Z^2}X^2\} \\
	     & = & (U_{j}^2  \setminus \{M^{*}X^1 : X^1 \in ({\cal V}'_{1}\cup {\cal V}_{1}^{F})\}) \\
	     &   &  \qquad\cup (\{M^{*}X^1 : X^1 \in ({\cal V}'_{1}\cup {\cal V}_{1}^{F}), MX^1 \in U_{j}^2\} )\\
	     & = & U_{j}^2 \,.
\end{eqnarray*}
Clearly the thesis follows also in this case.

\item  Let $X^3 \in {\cal V}_3$. Then we have
    \hspace*{-2cm}
    \begin{eqnarray}
        M^{*,Z^2}X^3 & \!\!\!=\!\!\! & M^{*}X^3 =
                        ((MX^3 \cap \pow(\pow(D^{*}))) \setminus
                        \{M^{*}X^2: X^2 \in {\cal V}'_{2}\}) \notag\\
                 &   &  \phantom{M^{*}A =} \qquad\cup \{M^{*}X^2: X^2 \in {\cal V}'_{2},~MX^2 \in MX^3\}
                       \label{a5}\\
        M^{Z^2,*}X^3 & \!\!\!=\!\!\! & ((M^{Z^2}X^3 \cap \pow(\pow(D^{*}))) \setminus
                        \{M^{Z^2,*}X^2: X^2 \in {\cal V}'_{2} \cup
                             \{Z_{1}^2,\ldots,Z_{p}^2\}\}) \notag\\
                 &   &  \qquad {}\cup \{M^{Z^2,*}X^2: X^2 \in {\cal V}'_{2} \cup
                             \{Z_{1}^2,\ldots,Z_{p}^2\},~M^{Z^2}X^2 \in M^{Z^2}X^3\} \notag\\
                 & \!\!\!=\!\!\! & ((MX^3 \cap \pow(\pow(D^{*}))) \setminus
                        (\{M^{*}X^2: X^2 \in {\cal V}'_{2}\} \cup
                        \{U_{j}^2: j = 1,\ldots,p \})) \notag\\
                 &   &  \qquad {}\cup \{M^{*}X^2: X^2 \in {\cal
		 V}'_{2},~MX^2 \in MX^3\} \notag\\
                 &   &  \qquad {}\cup (\{U_{j}^2: j = 1,\ldots,p\} \cap
			MX^3).\label{a6}
    \end{eqnarray}

    By putting
    \[
    \begin{array}{lll}
        P_{1} & = & MX^3 \cap \pow(\pow(D^{*}))\,,  \\
        P_{2} & = & \{M^{*}X^2: X^2 \in {\cal V}'_{2}\}\,,  \\
        P_{3} & = & \{U_{j}^2: j = 1,\ldots,p \}\,,  \\
        P_{4} & = & \{M^{*}X^2: X^2 \in {\cal V}'_{2},~MX^2 \in MX^3\}\,,  \\
        P_{5} & = & \{U_{j}^2: j = 1,\ldots,p\} \cap MX^3\,,
    \end{array}
    \]
    then (\ref{a5}) and (\ref{a6})
    can be respectively rewritten as
    \begin{eqnarray}
        M^{*,Z^2}X^3 & = & (P_{1} \setminus P_{2}) \cup P_{4} \label{a7}\\
        M^{Z^2,*}X^3 & = & (P_{1} \setminus (P_{2} \cup P_{3}))
                       \cup P_{4} \cup P_{5}\,.\label{a8}
    \end{eqnarray}
    Moreover, it is easy to verify that the following relations hold:
    \[
        P_{2} \cap P_{3}  =  \emptyset\,,~~~
        P_{5}  =  P_{1} \cap P_{3}\,,~~~ \text{and~~~}
        P_{4}  \subseteq  P_{2}\,,
    \]
    so that
    \begin{eqnarray}
    (P_{1} \setminus P_{2}) \cup P_{4} &=&
    (P_{1} \setminus (P_{2} \cup P_{3}))
     \cup P_{4} \cup (P_{1} \cap P_{3}) \notag\\
    &=& (P_{1} \setminus (P_{2} \cup P_{3})) \cup P_{4} \cup P_{5}\,.
    \label{a9}
    \end{eqnarray}
    Therefore, in view of (\ref{a7}) and (\ref{a8}) above,
    (\ref{a9})  yields $M^{*,Z^2}X^3 = M^{Z^2,*}X^3$.
\end{enumerate}
\end{proof}

The following lemma proves that satisfiability is preserved in the case of purely universal formulae.
\begin{lemma}\label{quantifiedform}
Let $(\forall z_1) \ldots (\forall z_n) \varphi_0$, $(\forall Z_1^1)
\ldots (\forall Z_m^1) \varphi_1$, and $(\forall Z_1^2) \ldots
(\forall Z_p^2) \varphi_2$ be conjuncts of $\psi$.  Then
\begin{itemize}
\item [(i)] if $\model\models (\forall z_1) \ldots (\forall z_n)
\varphi_0$, then $\model^*\models (\forall z_1) \ldots (\forall z_n)
\varphi_0$;

\item [(ii)] if $\model\models (\forall Z_1) \ldots (\forall Z_m)
\varphi_1$, then $\model^*\models (\forall Z_1) \ldots (\forall Z_m)
\varphi_1$;
\item [(iii)] if $\model\models (\forall Z_1^2) \ldots (\forall Z_p^2)
\varphi_2$, then $\model^*\models (\forall Z_1^2) \ldots (\forall Z_p^2)
\varphi_2$.
\end{itemize}
\end{lemma}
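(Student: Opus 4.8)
The plan is to prove the three implications in the stated order, (i) before (ii) before (iii), since the matrix of each universal conjunct contains, as subformulae, universal formulae of strictly lower level, which are disposed of by appealing to the earlier parts. A preliminary remark makes this possible: Lemmas~\ref{le_basic}, \ref{le_M*zMz*}, \ref{le_eqM*ZMZ*1}, and \ref{le_eqM*ZMZ*2} are in truth statements about the relativization operation of Definition~\ref{relintrp}; they remain valid, with the same proofs, when $\model$ is replaced by any interpretation agreeing with $\model$ on the variables of $\psi$ and on those of ${\cal V}_1^F$, whose further quantifier-bound variables receive values for which the distinguishing properties (\ref{prop1})--(\ref{prop3}) still hold. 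I shall use them in this slightly more general form. For part~(i) I fix $u_1,\ldots,u_n \in D^*$ and aim at $\model^{*,z}\models\varphi_0$. By Lemma~\ref{le_M*zMz*} I may replace $\model^{*,z}$ by $\model^{z,*}$; since $u_1,\ldots,u_n\in D^*\subseteq D$ and $\model$ satisfies the conjunct, $\model^{z}\models\varphi_0$; and as $\varphi_0$ is a propositional combination of quantifier-free atomic formulae whose sort-$1$ and sort-$2$ variables are free in $\psi$, Lemma~\ref{le_basic} (applied to $\model^{z}$ and its relativization $\model^{z,*}$) transfers $\varphi_0$ downward, giving $\model^{z,*}\models\varphi_0$ uniformly in the $u_i$.

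For part~(ii) I fix $U_1^1,\ldots,U_m^1\in\pow(D^*)$ and want $\model^{*,Z^1}\models\varphi_1$. The bound variables $Z_j^1$ lie outside ${\cal V}_1'\cup{\cal V}_1^F$, so if in addition every $U_j^1$ avoids $\{M^*X^1 : X^1\in{\cal V}_1'\cup{\cal V}_1^F\}$, then Lemma~\ref{le_eqM*ZMZ*1} identifies $\model^{*,Z^1}$ with the relativization $\model^{Z^1,*}$ of $\model^{Z^1}$, and $\model^{Z^1}\models\varphi_1$ because $U_j^1\in\pow(D)$. Any remaining index, where $U_j^1=M^*X^1=MX^1\cap D^*$ for a value $MX^1$ that is unique by property~(\ref{prop1}), is dealt with by lifting $Z_j^1$ to $MX^1$: the relativization returns $M^*X^1=U_j^1$, so $\model^{*,Z^1}$ still coincides with the relativization of the lifted interpretation $\model^{\hat Z^1}$ (a routine variant of Lemma~\ref{le_eqM*ZMZ*1}), and $\model^{\hat Z^1}\models\varphi_1$. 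In either case it remains to push $\varphi_1$ through the relativization by a sub-induction on its propositional structure: quantifier-free atomic subformulae are handled by Lemma~\ref{le_basic}; a positively occurring level-$1$ universal subformula transfers downward exactly as in part~(i), applied to $\model^{\hat Z^1}$; and a negatively occurring one, say $(\forall z_1)\ldots(\forall z_n)\varphi_0$, is by Restriction~\ref{restriction1} a level-$0$ formula linked to $Z_1^1,\ldots,Z_m^1$ through the valid implication~(\ref{condition}). Hence, whenever it fails in $\model^{\hat Z^1}$ there is a falsifying tuple $(u_1,\ldots,u_n)$ with each $u_i\in\bigcap_j M^{\hat Z^1}Z_j^1$; if some argument keeps a genuine value $U_j^1\subseteq D^*$ the witness already lies in $D^*$, and if all arguments are lifted free/$F$ values the witness is among those inserted in Step~4 for the corresponding tuple. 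A final application of Lemma~\ref{le_basic} transfers the falsity of $\varphi_0$ to $\model^{*,Z^1}$.

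Part~(iii) is analogous one level up. Fixing $U_1^2,\ldots,U_p^2\in\pow(\pow(D^*))$, lifting any $U_j^2=M^*X^2$ (unique by property~(\ref{prop2}) and Lemma~\ref{le_basic}) to $MX^2$, and invoking Lemma~\ref{le_eqM*ZMZ*2}, I identify $\model^{*,Z^2}$ with the relativization of a model $\model^{\hat Z^2}$ satisfying $\varphi_2$. The sub-induction on $\varphi_2$ now uses Lemma~\ref{le_basic} for atomic subformulae, part~(ii) for positively occurring level-$2$ universal subformulae, and part~(i) for positively occurring level-$1$ ones; by Restriction~\ref{restriction2} no level-$2$ universal subformula occurs negatively, while a negatively occurring level-$1$ one not nested in a level-$2$ formula must have the special shape $(\forall z_1)\ldots(\forall z_n)\neg(\bigwedge_{i,j}\langle z_i,z_j\rangle=Y_{ij}^2)$. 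For the latter, failure in $\model^{\hat Z^2}$ yields $u_1,\ldots,u_n\in D$ with $\langle u_i,u_j\rangle=M^{\hat Z^2}Y_{ij}^2$ for all $i,j$; in particular the diagonal equations give $M^{\hat Z^2}Y_{ii}^2=\{\{u_i\}\}$, and each $u_i$ is forced into $D^*$ either because this value is a genuine member of $\pow(\pow(D^*))$, whence $\{u_i\}\subseteq D^*$ at once, or because, being a free sort-$2$ value, its at-most-two-element Kuratowski components are captured by the threshold conditions built into $\mathcal{F}_2$ and $\Delta_2$ in Steps~1--3 (together with the auxiliary variables ${\cal V}_1^F$). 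Lemma~\ref{le_basic} then transfers the falsity to $\model^{*,Z^2}$.

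The technical heart of the argument, and the step I expect to be the main obstacle, is the treatment of the negatively occurring universal subformulae in parts~(ii) and~(iii): one must guarantee that a witness falsifying such a subformula, a priori drawn from the full domain $D$, can always be relocated inside the finite set $D^*$. This is precisely what Restrictions~\ref{restriction1} and~\ref{restriction2} were designed for --- through the linked implication~(\ref{condition}) and the special pair shape, respectively --- in concert with the enrichment of $D^*$ in Step~4 and the distinguishing/threshold properties of $\mathcal{F}$ and $\Delta$ from Steps~1--3. Verifying that the witnesses land in $D^*$ across every combination of genuine and lifted quantifier values, and that the generalized forms of the four preceding lemmas indeed apply to the substituted interpretations $\model^{\hat Z^1}$ and $\model^{\hat Z^2}$, is where the bulk of the careful bookkeeping lies.
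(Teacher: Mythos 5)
Your proof is correct and follows essentially the same route as the paper's: the same four technical lemmas, the same device of replacing bound values that coincide with relativized free-variable values by the corresponding free variables (the paper performs this substitution on the formula, producing $\bar{\varphi}_2$, rather than on the interpretation), and the same use of Restriction~\ref{restriction1} together with Step~4, and of Restriction~\ref{restriction2} together with the $\mathcal{F}_2$/$\Delta_2$ threshold conditions, to relocate falsifying witnesses of negatively occurring universal subformulae inside $D^*$. The only notable difference is that you work out part~(ii) in full, whereas the paper delegates it to the corresponding lemma of \cite{CanNic08}.
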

\begin{proof}
\begin{itemize}
\item [(i)]
Assume by contradiction that there exist $u_1,\ldots,u_n \in D^*$
such that $\model^{*,z}\not\models \varphi_0$.
Then, there must be an atomic formula $\varphi_0'$ in $\varphi_0$ that
is interpreted differently in $\model^{*,z}$ and in $\model^z$.
Recalling that $\varphi_0$ is a propositional combination of
quantifier-free atomic formulae of any level, let us first suppose
that $\varphi_0'$ is $X^2 = Y^2$ and, without loss of generality,
assume that $\model^{*,z}\not \models X^2 = Y^2$.  Then $M^{*,z}X^2
\neq M^{*,z}Y^2$, so that, by Lemma \ref{le_M*zMz*}, $M^{z,*}X^2 \neq
M^{z,*}Y^2$.  Then, Lemma \ref{le_basic} yields $M^zX^2 \neq M^zY^2$,
a contradiction.  The other cases are proved in an analogous way.

\item [(ii)] This case can be proved much along the same lines as the
proof of case (ii) of Lemma 4 in \cite{CanNic08}.  Here, one has to take
care of the fact that $\varphi_1$ may contain purely universal
formulae of level 1 occurring only positively in $\varphi_1$ and not
satisfying Restriction \ref{restriction1} of Section
\ref{restrictionquant}.  This is handled similarly to case (i)
of this lemma.  Another issue that has to be considered is the fact
that the collection of relevant variables of sort 1 for $\psi$ are not
just the variables occurring free in $\psi$, namely the ones in ${\cal
V}_1'$, but also the variables in ${\cal V}_1^{F}$, introduced to
denote the elements distinguishing the sets $M^*X^{2}$, for $X^{2} \in
\mathcal{V}'_{2}$.

\item [(iii)]
Assume, by way of contradiction, that
$\model \models(\forall
Z_1^2) \ldots (\forall Z_p^2) \varphi_2$,
but
$\model^* \not\models(\forall
Z_1^2) \ldots (\forall Z_p^2) \varphi_2$. Hence there exist $U_1^2,
\ldots, U_p^2 \in \pow(\pow(D^*))$ such that $\model^{*,Z^2}\not\models\varphi_2$.

Without loss of generality, assume that $U_i^2 = M^*X_i^2$, for $1
\leq i \leq k$ and where $X_1^2,\ldots,X_k^2 \in {\cal V}_2'$, and
that $U_j^2 \neq M^*X^2$, for $k+1 \leq j \leq p$ and
$X^2 \in {\cal V}_2'$, for some $k \geq 0$.

Let $\bar{\varphi}_2$ be the formula obtained by simultaneously
substituting $Z_1^2,\ldots,Z_k^2$ with $X_1^2,\ldots,X_k^2$ in
$\varphi_2$, and let $\model^{*,Z_k^2} =
\model^*[Z_{k+1}^2/U_{k+1}^2,\ldots,Z_{p}^2/U_p^2]$.  Further, let
${\cal M}^{Z^{2'}}$ be a $\QLQSR$-interpretation differing from
$\model^{Z^2}$ only in the evaluation of $Z_1^2,\ldots,Z_k^2$, with
$M^{Z^{2'}}Z_1^2 = MX_1^2,\ldots, M^{Z^{2'}}Z_k^2 = MX_k^2$.

We distinguish the following two cases:

\begin{description}
    \item[Case $k = p$:]
If $k = p$, then $\model^{*,Z_k^2}$ and $\model^*$ coincide and a
contradiction can be obtained by showing that the implications
\[
\model^{*,Z^2} \not\models \varphi_2 ~\Rightarrow~ \model^*\not\models
\bar{\varphi}_2 ~\Rightarrow~ \model \not\models \bar{\varphi}_2
~\Rightarrow~ \model^{Z^{2'}}\not\models \varphi_2
\]
hold, since these together with the fact that $\model^{*,Z^2}
\not\models \varphi_2$ would yield $\model \not\models (\forall
Z_1^2)\ldots(\forall Z_p^2)\varphi_2$, contradicting our initial
hypothesis.  The first implication,
$\model^{*,Z^2} \not\models \varphi_2 \Rightarrow \model^*\not\models
\bar{\varphi}_2$, is plainly derived from the definition of
$\bar{\varphi}_2$.  The second one, $\model^*\not\models
\bar{\varphi}_2 \Rightarrow \model \not\models \bar{\varphi}_2$, can
be proved as follows.  For every purely universal formula either of
level 1 or of level 2, $\bar{\varphi}_2'$, occurring only positively
in $\bar{\varphi}_2$, it follows that $\model^*\not\models
\bar{\varphi}_2' \Rightarrow \model \not\models \bar{\varphi}_2'$ by
reasoning as in case (i) or in case (ii) of the present lemma, respectively.  For
each other atomic formula $\bar{\varphi}_2'$ occurring in
$\bar{\varphi}_2$ we have to show that $\model^*$ and $\model$
evaluate $\bar{\varphi}_2'$ in the same manner.  If
$\bar{\varphi}_2'$ is a quantifier-free atomic formula, the proof
follows directly from Lemma \ref{le_basic}.  If $\bar{\varphi}_2'$ is
an atomic formula of level 1, it can only be of type $(\forall
z_1)\ldots(\forall z_n)\neg(\bigwedge_{i,j=1}^n \langle
z_i,z_j\rangle=Y_{ij}^2)$, where $Y_{ij}^2$ is any variable in
${\cal V}_2$.
Reasoning analogously to case (i) of the present lemma, it follows that $\model \models
\bar{\varphi}_2'\Rightarrow \model^* \models \bar{\varphi}_2'$.  Next,
let us prove that $\model^* \models \bar{\varphi}_2'\Rightarrow \model
\models \bar{\varphi}_2'$.  Assume by contradiction that $\model
\not\models \bar{\varphi}_2'$.  That is, $\model \not\models (\forall
z_1)\ldots(\forall z_n)\neg(\bigwedge_{i,j=1}^n\langle
z_i,z_j\rangle=Y_{ij}^2)$.  Then, there are $u_1,\ldots,u_n \in D$
such that $\model[z_1/u_1,\ldots,z_n/u_n]\models \bigwedge_{i,j=1}^n\langle z_i,z_j\rangle=Y_{ij}^2$.  By the
construction in Section \ref{decisionproc}, all these $u_i$s are in
$D^*$, $M Y_{ij} = M^* Y_{ij}$ and thus we finally obtain that
\[
\model^* \not\models (\forall z_1)\ldots(\forall
z_n)\neg(\bigwedge_{i,j=1}^n\langle
z_i,z_j\rangle=Y_{ij}^2),
\]
contradicting our hypothesis.

Finally, $\model \not\models \bar{\varphi}_2 \Rightarrow
\model^{Z^{2'}}\not\models \varphi_2$, follows from the definition of
$\bar{\varphi}_2$ and of $Z^{2'}$.


\item[Case $k < p$:] In this case, the schema of the proof is
analogous to the one in the previous case.  However, since
$\model^{*,Z_k^2}$ and $\model^*$ do not coincide, the single steps
are carried out in a slightly different manner.  Thus, for the sake of
clarity we report below the details of the proof.

In order to obtain a contradiction we prove that the following
implications hold
\[
\model^{*,Z^2} \not\models \varphi_2 ~\Rightarrow~
\model^{*,Z_k^2}\not\models \bar{\varphi}_2 ~\Rightarrow~
\model^{Z_k^2}\not\models \bar{\varphi}_2 ~\Rightarrow~
\model^{Z^{2'}}\not\models \varphi_2\,.
\]

The first implication, $\model^{*,Z^2} \not\models \varphi_2
\Rightarrow \model^{*,Z_k^2} \not\models \bar{\varphi}_2$, can be
immediately deduced from the definition of $\bar{\varphi}_2$ and of
$\model^{*,Z_k^2}$.  The second implication,
$\model^{*,Z_k^2}\not\models \bar{\varphi}_2 \Rightarrow
\model^{Z_k^2}\not\models \bar{\varphi}_2$, can be proved as shown
next.  If $\bar{\varphi}_2'$ is a purely universal formula either of
level 1 or of level 2 occurring only positively in $\bar{\varphi}_2$,
we have $\model^{*,Z_k^2}\not\models \bar{\varphi}_2'$ and,
since $\model^{*,Z_k^2}$ and
$\model^{Z_k^2,*}$ coincide (by Lemma \ref{le_eqM*ZMZ*2}), we obtain
$\model^{Z_k^2,*}\not\models \bar{\varphi}_2'$.  Then, reasoning as in case
(i) (if $\bar{\varphi}_2'$ is of level 1) or in case (ii) (if
$\bar{\varphi}_2'$ is of level 2) of the present lemma, it follows
that $\model^{Z_k^2}\not\models \bar{\varphi}_2'$.  If
$\bar{\varphi}_2'$ is a quantifier-free atomic formula occurring in
$\bar{\varphi}_2$, we prove that $\bar{\varphi}_2'$ in
$\bar{\varphi}_2$ is interpreted in $\model^{*,Z_k^2}$ and in
$\model^{Z_k^2}$ in the same way, using Lemmas \ref{le_eqM*ZMZ*2} and
\ref{le_basic}.

If $\bar{\varphi}_2'$ is a purely universal formula of level 1, it
must have the form
\[
(\forall z_1)\ldots(\forall
z_n)\neg(\bigwedge_{i,j=1}^n\langle
z_i,z_j\rangle=Y_{ij}^2)\,,
\]
where $Y_{ij}^2$ is any variable in ${\cal V}_2$.  In this case the
proof is carried out as shown next.  Reasoning as in case (i), we have $\model^{Z_k^2} \models
\bar{\varphi}_2'\Rightarrow \model^{Z_k^2,*} \models
\bar{\varphi}_2'$, and by Lemma \ref{le_eqM*ZMZ*2}, that
$\model^{*,Z_k^2} \models \bar{\varphi}_2'$. Proceeding as in the first case of this item of
the present lemma,
we obtain that
$\model^{Z_k^2,*} \not\models (\forall z_1)\ldots(\forall
z_n)\neg(\bigwedge_{i,j=1}^n\langle
z_i,z_j\rangle=Y_{ij}^2)$ and, by Lemma \ref{le_eqM*ZMZ*2}, that
$\model^{*,Z_k^2} \not\models (\forall z_1)\ldots(\forall
z_n)\neg(\bigwedge_{i,j=1}^n\langle
z_i,z_j\rangle=Y_{ij}^2)$, contradicting our hypothesis.

Finally, the third implication, $\model^{Z_k^2}\not\models
\bar{\varphi}_2 \Rightarrow \model^{Z^{2'}}\not\models \varphi_2$
follows directly from the definition of $\bar{\varphi}_2$ and of
$Z^{2'}$.
\end{description}
\end{itemize}
\end{proof}
We can now state and prove our main result.
\begin{theorem}\label{correctness}
Let $\model$ be a $\QLQSR$-interpretation satisfying a normalized
$\QLQSR$-conjunction $\psi$.  Then $\model^*\models \psi$, where
$\model^*$ is the relativized interpretation of $\model$ with respect
to a domain $D^{*}$ satisfying (\ref{DStar}).
\end{theorem}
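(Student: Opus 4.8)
The plan is to derive the theorem as a straightforward assembly of the technical lemmas proved above, exploiting the fact that $\psi$, being a normalized $\QLQSR$-conjunction, is a conjunction of literals of the three prescribed shapes. Since an interpretation satisfies a conjunction exactly when it satisfies each conjunct, it suffices to show that $\model^*$ satisfies every conjunct $\lambda$ of $\psi$, given that $\model$ does. According to the classification of normalized conjuncts in Section~\ref{normal3LQS}, each such $\lambda$ is either a quantifier-free literal (of level $0$, $1$, or $2$), or a purely universal formula of level $1$, $2$, or $3$; moreover, since the negative quantified conjuncts were removed during normalization, every purely universal conjunct occurs positively.

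For the purely universal conjuncts I would appeal directly to Lemma~\ref{quantifiedform}: its three items assert precisely that whenever $\model$ satisfies a purely universal conjunct of level $1$, $2$, or $3$, so does $\model^*$, so these conjuncts require no separate argument. For the quantifier-free literals, the crucial remark is that, $\psi$ being normalized (no variable occurring both free and bound, distinct quantifiers binding distinct variables), every variable occurring in a top-level quantifier-free literal is free in $\psi$, hence lies in ${\cal V}'_0 \cup {\cal V}'_1 \cup {\cal V}'_2$, while every sort-$3$ variable is free as well (quantification ranges only over sorts $0,1,2$). Consequently the relevant instances of conditions~(\ref{prop1}), (\ref{prop2}), and (\ref{prop3})---which hold by the very construction of $D^*$---are available for each atom, and for $x \in {\cal V}'_0$ the construction guarantees $Mx \in D^*$, so the side conditions $Mx,My \in D^*$ demanded by several clauses of Lemma~\ref{le_basic} are automatically met. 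I would then invoke the matching clause of Lemma~\ref{le_basic} to obtain, for each quantifier-free atom $\varphi$ occurring in $\psi$, the equivalence $\model^* \models \varphi$ iff $\model \models \varphi$; the lone atom not literally listed there, an equality $\langle x,y\rangle = \langle x',y'\rangle$ between two pair terms, reduces via the Kuratowski reading $M\langle x,y\rangle = \{\{Mx\},\{Mx,My\}\}$ to level-$0$ equalities covered by clause~(\ref{le1}). From this equivalence a positive literal $\varphi$ transfers from $\model$ to $\model^*$ at once, and a negative literal $\neg\varphi$ transfers by contraposition.

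Putting the two cases together, $\model^*$ satisfies each conjunct of $\psi$ and therefore $\model^* \models \psi$; the size bound (\ref{DStar}) is inherited verbatim from the construction of $D^*$ in Section~\ref{ssseUniv} and needs no re-derivation. I expect essentially no obstacle in the theorem itself: all the genuine difficulty has been pushed into the supporting lemmas, and in particular into case~(iii) of Lemma~\ref{quantifiedform}, where Restriction~\ref{restriction2} on level-$3$ formulae together with the identifications furnished by Lemmas~\ref{le_M*zMz*}, \ref{le_eqM*ZMZ*1}, and \ref{le_eqM*ZMZ*2} must be orchestrated carefully. The only point in the present proof demanding genuine attention is the verification that conditions~(\ref{prop1})--(\ref{prop3}) are legitimately applicable to every quantifier-free atom of $\psi$, which, as noted, follows from each such atom involving only free (equivalently, non-quantified) variables.
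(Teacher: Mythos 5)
Your proof is correct and follows essentially the same route as the paper's: decompose $\psi$ into its conjuncts and dispatch each one via Lemma~\ref{le_basic} (quantifier-free literals, using the iff form to handle negated atoms) or Lemma~\ref{quantifiedform} (purely universal conjuncts). The extra care you take in checking that conditions~(\ref{prop1})--(\ref{prop3}) and the side hypotheses $Mx,My\in D^*$ are actually available for every top-level atom is a detail the paper leaves implicit, but it is a correct and welcome elaboration rather than a divergence.
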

\begin{proof}
We only have to prove that $\model^*\models \psi'$, for each conjunct
$\psi'$ occurring in $\psi$.  Each such $\psi'$ must be of one of the
types (1)--(3) enumerated in Section \ref{normal3LQS}.  By applying
either Lemma \ref{le_basic} or Lemma \ref{quantifiedform} to each
$\psi'$ (according to its type) we obtain the thesis.
\end{proof}
From the above reduction and relativization steps, the following
result follows easily:
\begin{corollary}
    The fragment $\QLQSR$ enjoys a small model property (and
    therefore it has a solvable satisfiability problem). 
\end{corollary}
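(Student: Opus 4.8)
The plan is to lift the small model property from normalized $\QLQSR$-conjunctions, where it is already guaranteed by Theorem~\ref{correctness}, to arbitrary $\QLQSR$-formulae, and then to derive decidability by a bounded search argument. First I would invoke the reduction of Section~\ref{normal3LQS}: given any $\QLQSR$-formula $\psi$, pass to a disjunctive normal form $\psi_{DNF}$ and eliminate negative quantified conjuncts by introducing fresh sort-$0$ witness variables, obtaining finitely many normalized $\QLQSR$-conjunctions $\psi_1,\ldots,\psi_N$ with the property that $\psi$ is satisfiable if and only if some $\psi_i$ is. Each $\psi_i$ is built from literals occurring in $\psi$ together with the witness variables, so the number of its free variables of each sort is bounded by a function of the size of $\psi$.

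Next, assume $\psi$ is satisfiable and fix an index $i$ and a model $\model$ with $\model\models\psi_i$. By Theorem~\ref{correctness}, the relativized interpretation $\model^*$ satisfies $\psi_i$ and its domain $D^*$ obeys the bound~(\ref{DStar}). Since the relevant variable counts $|{\cal V}_0'|,|{\cal V}_1'|,|{\cal V}_2'|$ and the quantifier-prefix lengths $L_m,L_n$ for $\psi_i$ are all bounded in terms of the size of $\psi$, the right-hand side of~(\ref{DStar}) is itself bounded by a computable function of the size of $\psi$.

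It then remains to turn $\model^*$, a model of the normalized conjunction $\psi_i$, into a model of $\psi$ of the same cardinality: because the negative quantified conjuncts were removed by replacing $\neg(\forall z_1)\ldots(\forall z_n)\varphi_0$ with its instance on fresh witnesses, the values $\model^*$ assigns to those witnesses exhibit, via the semantics of Section~\ref{genericlanguage}, the falsifying assignments required by the original negative literals; hence $\model^*$ restricted to the vocabulary of $\psi$ satisfies the corresponding disjunct of $\psi_{DNF}$, and therefore $\psi$ itself. This yields a model of $\psi$ whose cardinality is bounded by a computable function of the size of $\psi$, which is exactly the small model property. Decidability follows at once: to decide whether $\psi$ is satisfiable one enumerates all $\QLQSR$-interpretations over domains $\{1,\ldots,k\}$ for $k$ up to the bound and checks each against $\psi$; for a finite domain there are only finitely many assignments to the free variables, since the sort-$1$, $2$, $3$ variables range over the finite sets $\pow(D)$, $\pow(\pow(D))$, $\pow(\pow(\pow(D)))$, and evaluating a $\QLQSR$-formula over a finite interpretation is effective because all its quantifiers range over these finite sets.

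I expect the only delicate point to be the passage from a small model of $\psi_i$ back to a small model of $\psi$: one must verify that undoing the DNF split and the witness-elimination step neither enlarges the domain nor disturbs the truth values already secured by Theorem~\ref{correctness}, and that the size bound for the chosen $\psi_i$ is genuinely controlled by the size of $\psi$ rather than by the (possibly exponential) number $N$ of disjuncts. Everything else is bookkeeping on top of Theorem~\ref{correctness} and the bound~(\ref{DStar}).
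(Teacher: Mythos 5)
Your proposal is correct and follows essentially the same route the paper intends: the paper states the corollary with no explicit proof, asserting only that it "follows from the above reduction and relativization steps," which is precisely the combination of the DNF/witness-elimination reduction of Section~\ref{normal3LQS} with Theorem~\ref{correctness} and the bound~(\ref{DStar}) that you spell out. Your added care about transferring the small model of a normalized conjunction $\psi_i$ back to the original formula, and about the per-disjunct (rather than whole-DNF) size bound, fills in details the paper leaves implicit but introduces nothing different in substance.
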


\section{Expressiveness of the language $\QLQSR$}\label{sec:applications}
\label{expressiveness}
Much as shown in \cite{CanNic08}, the language $\QLQSR$ can express a
restricted variant of the set-formation operator, which in turn allows one to
express other significant set operators such as binary union,
intersection, set difference, the singleton operator, the powerset
operator (over subsets of the universe only), etc.  More specifically,
atomic formulae of type $X^{i}=\{X^{i-1} :
\varphi(X^{i-1})\}$, for $i = 1,2,3$, can be expressed in $\QLQSR$ by
the formulae
\[
(\forall X^{i-1})(X^{i-1} \in X^{i} \leftrightarrow
\varphi(X^{i-1}))
\]
provided that the syntactic constraints of $\QLQSR$ are satisfied.

Since $\QLQSR$ is a superlanguage of $\mathit{3LQS^{R}}$, the language
$\QLQSR$ can express the syllogistic $2LS$ (cf.\ \cite{FerOm1978}) and
the sublanguage $3LSSP$ of $3LSSPU$ not involving the set-theoretic
construct of general union, since these are expressible in
$\mathit{3LQS^{R}}$, as shown in \cite{CanNic08}.  We recall that
$3LSSPU$ admits variables of three sorts and, besides the usual
set-theoretical constructs, it involves the `singleton set' operator
$\{\cdot\}$, the powerset operator $\pow$, and the general union
operator $\mathit{Un}$.
$3LSSP$ can plainly be decided by the decision procedure presented in
\cite{CanCut93} for the whole fragment $3LSSPU$.

Among the other constructs of set theory which are expressible in the
language $\QLQSR$ (cf.\ \cite{CanNic08}), we cite:
\begin{itemize}

\item literals of the form $X^2 = \pow_{< h}(X^1)$, where $\pow_{<
h}(X^1)$ denotes the collection of subsets of $X^1$ with less than $h$
elements;

\item the unordered Cartesian product $X^2 = X_1^1 \otimes \ldots
\otimes X_n^1$, where $X_1^1 \otimes \ldots
\otimes X_n^1$ denotes the collection $\{\{ x_{1},\ldots,x_{n}\} : x_{1} \in X_1^1,
\ldots, x_{n} \in X_n^1\}$;

\item literals of the form $A = \pow^*(X_1^1,\ldots,X_n^1)$, where
$\pow^*(X_1^1,\ldots,X_n^1)$ is the variant of the powerset
introduced in \cite{Can91} which denotes the collection
$$
\{Z : Z \subseteq \bigcup_{i=1}^n X_i^1
         \hbox{ and } Z \cap X_i^1 \neq \emptyset,
         \hbox{ for all } 1 \leq i \leq n \},
$$
\end{itemize}
For instance, a literal of the form $X^2 = \pow_{< h}(X^1)$, with $h
\geq 2$, can be expressed by the $\QLQSR$-formula
\[
(\forall Y^{1})\Bigg(Y^{1} \in X^2 \leftrightarrow \Bigg( (\forall
z)\left(z
\in Y^{1} \rightarrow z \in X^1\right) \And
(\forall z_{1})\ldots(\forall z_{h})\Bigg(\bigwedge_{i=1}^{h} z_{i}
\in Y^1 \rightarrow \bigvee_{\substack{i,j=1\\i<j}}^{h} z_{i} = z_{j}\Bigg)
\!\Bigg)\!\Bigg),
\]
as can be easily verified.

\subsection{Other applications of $\QLQSR$}
Within the $\QLQSR$ language it is also possible to define binary
relations over elements of a domain together with several conditions
on them which characterize accessibility relations of well-known modal
logics.  These formalizations are illustrated in Table
\ref{tab:accesrel}.

\begin{table}[tb]
\begin{tabular}{|l|l|}
  \hline
  Binary relation & $(\forall Z^2)(Z^2 \in X_R^3 \leftrightarrow \neg(\forall z_1, z_2)\neg(\langle z_1,z_2\rangle = Z^2))$\\\hline\hline
  Reflexive & $(\forall z_1)(\langle z_1,z_1\rangle \in X_R^3)$\\\hline
  Symmetric & $(\forall z_1,z_2)(\langle z_1,z_2\rangle \in X_R^3 \rightarrow \langle z_2,z_1\rangle \in X_R^3)$\\\hline
  Transitive & $(\forall z_1,z_2,z_3)((\langle z_1,z_2\rangle \in X_R^3 \And \langle z_2,z_3\rangle \in X_R^3)\rightarrow \langle z_1,z_3\rangle \in X_R^3)$\\\hline
  Euclidean  & $(\forall z_1,z_2,z_3)((\langle z_1,z_2\rangle \in X_R^3 \And \langle z_1,z_3\rangle \in X_R^3)\rightarrow \langle z_2,z_3\rangle \in X_R^3)$ \\ \hline
  Weakly-connected &  $(\forall z_1,z_2,z_3)((\langle z_1,z_2\rangle \in X_R^3 \And \langle z_1,z_3\rangle \in X_R^3)$ \\
                   & \hfill $ \rightarrow (\langle z_2,z_3\rangle \in X_R^3 \vee z_2 = z_3 \vee \langle z_3,z_2\rangle \in X_R^3))$\\ \hline
  Irreflexive & $(\forall z_1)\neg (\langle z_1,z_1\rangle\in X_R^3)$\\ \hline
  Intransitive & $(\forall z_1,z_2,z_3)((\langle z_1,z_2\rangle \in X_R^3 \And \langle z_2,z_3\rangle \in X_R^3)\rightarrow \neg \langle z_1,z_3\rangle \in X_R^3)$\\\hline
  Antisymmetric & $(\forall z_1,z_2)((\langle z_1,z_2\rangle \in X_R^3 \wedge \langle z_2,z_1\rangle \in X_R^3)\rightarrow (z_1 = z_2))$\\\hline
  Asymmetric & $(\forall z_1,z_2)(\langle z_1,z_2\rangle \in X_R^3 \rightarrow \neg(\langle z_2,z_1\rangle \in X_R^3))$\\
  \hline
\end{tabular}
\caption{\label{tab:accesrel} $\QLQSR$ formalization of conditions of accessibility relations}
\end{table}

Usual Boolean operations over relations can be defined as shown in
Table \ref{tab:Bolop}.
\begin{table}[tb]
\begin{center}
\begin{tabular}{|l|l|l|}
  \hline
Intersection & $X_R^3 = X_{R_1}^3 \cap X_{R_2}^3$ & $(\forall Z^2)(Z^2 \in X_R^3 \leftrightarrow (Z^2 \in X_{R_1}^3 \And Z^2 \in X_{R_2}^3))$\\\hline
Union & $X_R^3 = X_{R_1}^3 \cup X_{R_2}^3$ & $(\forall Z^2)(Z^2 \in X_R^3 \leftrightarrow (Z^2 \in X_{R_1}^3 \Or Z^2 \in X_{R_2}^3))$\\\hline
Complement & $X_{R_1}^3 = \overline{X_{R_2}^3}$ & $(\forall Z^2)(Z^2 \in X_{R_1}^3 \leftrightarrow \neg(Z^2 \in \overline{X_{R_2}^3}))$ \\\hline
Set difference & $X_R^3 = X_{R_1}^3 \setminus X_{R_2}^3$ & $(\forall Z^2)(Z^2 \in X_R^3 \leftrightarrow (Z^2 \in X_{R_1}^3 \And \neg (Z^2 \in X_{R_2}^3)))$\\\hline
Set inclusion &  $X_{R_1}^3 \subseteq X_{R_2}^3$ & $(\forall Z^2)(Z^2 \in X_{R_1}^3 \rightarrow Z^2 \in X_{R_2}^3)$\\
 \hline
\end{tabular}
\caption{\label{tab:Bolop} $\QLQSR$ formalization of Boolean operations over relations}
\end{center}
\end{table}
The language $\QLQSR$ allows one also to express the inverse
$X_{R_2}^3$ of a given binary relation $X_{R_1}^3$ (namely, to express
the literal $X_{R_2}^3 = (X_{R_1}^3)^{-1}$) by means of the
$\QLQSR$-formula $(\forall z_1,z_2)\left(\langle z_1,z_2\rangle \in
X_{R_1}^3 \leftrightarrow \langle z_2,z_1 \rangle \in X_{R_2}^3\right)$.

In the next section we will present an application of the decision
procedure for $\QLQSR$-formulae to modal logic.
For this purpose we introduce below a family $\{(\QLQSR)^h\}_{h \geq
2}$ of fragments of $\QLQSR$, each of which has an
\textsf{NP}-complete satisfiability problem, and then show, in the
next section, that the modal logic $\Kqc$ can be formalized in
$(\QLQSR)^3$ in a succint way, thus rediscovering the
\textsf{NP}-completeness of the decision problem for $\Kqc$ (cf.\
\cite{Lad77}).

Formulae in $(\QLQSR)^h$ must satisfy various syntactic constraints.
First of all, all quantifier prefixes occurring in a formula in
$(\QLQSR)^h$ must have their length bounded by the constant $h$.
Thus, given a satisfiable $(\QLQSR)^h$-formula $\varphi$ and a
$\QLQSR$-model $\model = (D, M)$ for it, from
Theorem~\ref{correctness} it follows that $\varphi$ is satisfied by
the relativized interpretation $\model^* = (D^{*}, M^{*})$ of $\model$
with respect to a domain $D^{*}$ whose size is bounded by the
expression in (\ref{DStar}).  But since in this case $L_{m} \leq h$
and $L_{n} \leq h$, where $L_{m}$ and $L_{n}$ are defined as in Step 4
of the construction of $D^{*}$ (cf.\ Section~\ref{ssseUniv}), it
follows that the bound in (\ref{DStar}) is quadratic in the size of
$\varphi$.  The remaining syntactic constraints on
$(\QLQSR)^h$-formulae will allow us to deduce that $M^{*}X^2 \subseteq
\pow_{< h}(D^{*})$, for any free variable $X^{2}$ of sort 2 in
$\varphi$, and $M^{*}X^3 \subseteq \pow_{< h}(\pow_{< h}(D^{*}))$, for
any free variable $X^{3}$ of sort 3 in $\varphi$, so that the model
$\model^{*}$ can be guessed in nondeterministic polynomial time in the
size of $\varphi$, and one can check in deterministic polynomial time
that $\model^{*}$ actually satisfies $\varphi$, proving that the
satisfiability problem for $(\QLQSR)^h$-formulae is in \textsf{NP}.
As the satisfiability problem \textsf{SAT} for propositional logic can
be readily reduced to that for $(\QLQSR)^h$-formulae, the
\textsf{NP}-completeness of the latter problem follows.

\begin{definition}[$(\QLQSR)^h$-formulae]\label{def:hlang}
Let $\varphi$ be a $\QLQSR$-formula involving the designated free
variables $X_U^1$, $X_{< h}^2$, and $X_{< h}^3$ (of sort 1, 2, and 3,
respectively).  Let $X_1^2,\ldots,X_p^2$ be the free variables of sort
2 occurring in $\varphi$, distinct from $X_{< h}^2$.  Likewise, let
$X_1^3,\ldots,X_k^3$ be the free variables of sort 3 occurring in
$\varphi$, distinct from $X_{< h}^3$. Then $\varphi$ is a
$(\QLQSR)^h$-formula, with $h \geq 2$, if it has the form (up to the
order of the conjuncts)
\[
\xi_U^1 \And \xi_{< h}^2 \And \xi_{< h}^3 \And \psi_1^2 \And
\ldots \And \psi_p^2 \And \psi_1^3\And \ldots\And \psi_k^3 \And
\chi \, ,
\]
where
\begin{enumerate}
\item $\xi_U^1 \equiv (\forall z)(z \in X_U^1)$,

\emph{i.e., $X_U^1$ is the (nonempty) universe of discourse};

\item $\xi_{< h}^2 \equiv (\forall Z^1)\left(Z^1 \in X_{< h}^2
\rightarrow
      (\forall z_1)\ldots (\forall
      z_{h})\left(\bigwedge_{i=1}^{h} z_i \in Z^1 \rightarrow
      \bigvee_{i,j=1, i<j}^{h} z_i = z_j\right)\right)$,

\emph{i.e., $X_{< h}^2 \subseteq \pow_{< h}(X_U^1)$ (together with
formula $\xi_U^1$)};

\item $\xi_{< h}^3 \equiv (\forall Z^2)\Big(Z^2 \in X_{< h}^3
\rightarrow \Big((\forall Z^1)(Z^1 \in Z^2 \rightarrow Z^1 \in
X_{< h}^2)$

      $\phantom{\xi_{\leq h}^2 =}\And(\forall Z_1^1)\ldots
  (\forall Z_{h}^1)\big(\bigwedge_{i=1}^{h} Z_i^1 \in Z^2 \rightarrow
      \bigvee_{i,j=1, i<j}^{h} Z_i^1 = Z_j^1\big)\Big)\Big)$,

\emph{i.e., $X_{< h}^3 \subseteq \pow_{< h}(\pow_{< h}(X_U^1))$
(together with formulae $\xi_U^1$ and $\xi_{< h}^2$)};

\item either $\psi_i^2 \equiv (\forall Z^1)(Z^1 \in X_i^2 \rightarrow
Z^1 \in X_{< h}^2)$ or $\psi_i^2 \equiv X_i^2 \in X_{< h}^3$, for $i
= 1,\ldots,p$,

\emph{so that, $X_i^2 \subseteq \pow_{< h}(X_U^1)$, for $i = 1,\ldots,p$
(together with formulae $\xi_U^1$ and $\xi_{< h}^2$)};

\item $\psi_j^3 \equiv (\forall Z^2)(Z^2 \in
X_j^3 \rightarrow Z^2 \in X_{< h}^3)$, for $j = 1,\ldots,k$,

\emph{i.e., $X_j^3 \subseteq \pow_{< h}(\pow_{< h}(X_U^1))$, for $j =
1,\ldots,k$ (together with formulae $\xi_U^1$, $\xi_{< h}^2$, and
$\xi_{< h}^3$)};

\item $\chi$ is a propositional combination of
\begin{enumerate}
    \item quantifier-free atomic formulae of any level,

    \item\label{levelOne} purely universal formulae of level 1 of the form
    \[
    (\forall
    z_1)\ldots(\forall z_n)\varphi_0\,,
    \]
    with $n \leq h$,

    \item\label{levelTwo} purely universal formulae of level 2 of the form
    \[
    (\forall
    Z_1^1)\ldots (\forall Z_m^1)((Z_1^1 \in X_{< h}^2 \wedge \ldots
    \wedge Z_m^1 \in X_{< h}^2) \rightarrow \varphi_1)\,,
    \]
    where $m \leq h$ and $\varphi_1$ is a propositional combination of
    quantifier-free atomic formulae and of purely universal formulae
    of level 1 satisfying (\ref{levelOne}) above,

    \item purely universal formulae of level 3 of the form
    \[
    (\forall
    Z_1^2)\ldots (\forall Z_p^2)((Z_1^2 \in X_{< h}^3 \wedge \ldots
    \wedge Z_p^2 \in X_{< h}^3) \rightarrow \varphi_2)\,,
    \]
    where $p \leq h$ and $\varphi_2$ is a propositional combination of
    quantifier-free atomic formulae, and of purely universal formulae
    of level 1 and of level 2 satisfying (\ref{levelOne}) and
    (\ref{levelTwo}) above.
\end{enumerate}
\end{enumerate}

\end{definition}

Having defined the fragments $(\QLQSR)^{h}$, for $h \geq 2$, next we
prove that each of them has an \textsf{NP}-complete satisfiability
problem.
\begin{theorem}
The satisfiability problem for $(\QLQSR)^{h}$ is
\textsf{NP}-complete, for any $h \geq 2$.
\end{theorem}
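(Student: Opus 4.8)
The plan is to establish the two directions of \textsf{NP}-completeness separately: membership in \textsf{NP} and \textsf{NP}-hardness. The hardness part is the easier of the two, and I would obtain it by reducing the propositional satisfiability problem \textsf{SAT} to the satisfiability problem for $(\QLQSR)^{h}$. Given a propositional formula $\Phi$ over variables $p_1,\ldots,p_n$, I introduce fresh sort-0 variables $x_1,y_1,\ldots,x_n,y_n$ and replace each occurrence of $p_i$ by the level-0 atom $x_i = y_i$, obtaining a propositional combination $\chi$ of quantifier-free atomic formulae. Such a $\chi$ is admissible as the last conjunct in the form prescribed by Definition~\ref{def:hlang} (item~6(a)), so the formula $\xi_U^1 \And \xi_{< h}^2 \And \xi_{< h}^3 \And \chi$ is a legal $(\QLQSR)^h$-formula, built in time linear in the size of $\Phi$. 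Since the atoms $x_i = y_i$ range over fresh, mutually independent variables, and the three $\xi$-conjuncts are always satisfiable (e.g.\ by interpreting $X_U^1$ as the whole domain and $X_{< h}^2, X_{< h}^3$ as the empty set, which satisfies the guarding implications vacuously), a truth assignment satisfies $\Phi$ exactly when the corresponding $\QLQSR$-interpretation satisfies the translated formula; hence the reduction is correct.

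For membership in \textsf{NP}, I would rely on the small model property of Theorem~\ref{correctness}. Given a satisfiable $(\QLQSR)^h$-formula $\varphi$ with $\QLQSR$-model $\model$, Theorem~\ref{correctness} yields the relativized model $\model^* = (D^*, M^*)$ satisfying $\varphi$ over a domain $D^*$ whose cardinality is bounded by~(\ref{DStar}). The defining feature of $(\QLQSR)^h$ is that every quantifier prefix has length at most $h$, so the parameters $L_m$ and $L_n$ occurring in~(\ref{DStar}) are at most the fixed constant $h$; consequently the bound in~(\ref{DStar}) is polynomial in the size of $\varphi$ (in fact quadratic, as already observed before Definition~\ref{def:hlang}). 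Thus $D^*$ itself can be guessed in nondeterministic polynomial time.

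The crux of the argument, and the step I expect to require the most care, is to show that the \emph{whole} interpretation $\model^*$ --- not merely its domain --- admits a polynomial-size description, so that it can be guessed and then verified in polynomial time. The difficulty is that the value $M^*X^3$ of a free sort-3 variable a priori ranges over $\pow(\pow(\pow(D^*)))$, which is triply exponential in $|D^*|$. Here the remaining syntactic constraints of Definition~\ref{def:hlang} become essential. I would unfold Definition~\ref{relintrp} and use the conjuncts $\xi_U^1, \xi_{< h}^2, \xi_{< h}^3$ together with each $\psi_i^2$ and $\psi_j^3$ to prove, for every free sort-2 variable $X^2$, the containment $M^*X^2 \subseteq \pow_{< h}(D^*)$, and for every free sort-3 variable $X^3$, the containment $M^*X^3 \subseteq \pow_{< h}(\pow_{< h}(D^*))$. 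The verification rests on the facts that $M^*X^1 = MX^1 \cap D^*$ can only shrink a set, and that the elements the relativization adds to $M^*X^2$ are precisely such restricted sets $M^*X^1$ with $MX^1 \in MX^2$, which therefore inherit the cardinality bound $< h$. Since $\pow_{< h}(D^*)$ has only $O(|D^*|^{h-1})$ members and $\pow_{< h}(\pow_{< h}(D^*))$ only $O(|D^*|^{(h-1)^2})$ members --- both polynomial for fixed $h$ --- each $M^*X^2$ and $M^*X^3$ is a subset of a polynomial-size universe, and hence is polynomially representable.

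Finally I would dispatch the verification step. Interpretations of sort-0 and sort-1 variables are trivially of polynomial size (elements and subsets of $D^*$), and by the preceding step the same holds for the free sort-2 and sort-3 variables; so a complete guess of $\model^*$ has polynomial size. Checking $\model^* \models \varphi'$ for each conjunct $\varphi'$ of $\varphi$ then takes deterministic polynomial time: a quantifier-free conjunct is evaluated directly from the equality and membership tables on $D^*$, while a purely universal conjunct with prefix length at most $h$ is verified by ranging over all $O(|D^*|^h)$ relevant tuples --- polynomially many, since $h$ is fixed --- and evaluating its matrix on each. This places the problem in \textsf{NP}, and together with the hardness reduction it completes the proof.
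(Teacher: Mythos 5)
Your hardness reduction and your derivation of the containments $M^*X^2 \subseteq \pow_{< h}(D^*)$ and $M^*X^3 \subseteq \pow_{< h}(\pow_{< h}(D^*))$ from the conjuncts $\xi_U^1$, $\xi_{< h}^2$, $\xi_{< h}^3$, $\psi_i^2$, $\psi_j^3$ match the paper's argument. The genuine gap is at the very first step of your membership argument: you apply Theorem~\ref{correctness} directly to the input formula $\varphi$, but that theorem is stated and proved only for \emph{normalized $\QLQSR$-conjunctions}, i.e.\ conjunctions of quantifier-free literals and positively occurring purely universal formulae. A $(\QLQSR)^h$-formula is in general not of this shape: its conjunct $\chi$ is an arbitrary propositional combination and may contain disjunctions and, crucially, \emph{negated} purely universal subformulae. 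The relativization machinery does not cover these: Lemma~\ref{quantifiedform} is one-directional (truth of $(\forall\ldots)\phi$ in $\model$ transfers to $\model^*$, not falsity), and Step~4 of the construction of $D^*$ inserts witnesses only for failures of the level-2 matrices, so nothing guarantees that a conjunct $\neg(\forall\ldots)\phi$ satisfied by $\model$ is still satisfied by $\model^*$. The paper closes exactly this hole with the saturation rules R1--R6: a nondeterministic, linear-size preprocessing that selects one disjunct of each disjunctive subformula and replaces each negated quantified subformula by an instance over fresh witness variables (which then become free variables fed into the construction of $D^*$), producing a satisfiable conjunction $\psi$ with $|\psi| = \mathcal{O}(|\varphi|)$ and $\psi \rightarrow \varphi$ valid; the relativization is applied to this $\psi$, and the certificate is the pair $(\psi, \model^*)$. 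Without some such step your certificate is not known to exist; and the obvious alternative, passing to disjunctive normal form as in Section~\ref{normal3LQS}, can blow up exponentially and would destroy the polynomial bound.

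A second, more minor imprecision concerns the verification step, where you claim every purely universal conjunct is checked by ranging over $O(|D^*|^h)$ tuples. That count is correct only for sort-0 quantifiers; quantified variables of sorts 1 and 2 range over $\pow(D^*)$ and $\pow(\pow(D^*))$, which are exponential and doubly exponential in $|D^*|$. What saves the argument is the guarded form that Definition~\ref{def:hlang} imposes on level-2 and level-3 purely universal formulae: their matrices are implications with antecedents $Z_i^1 \in X_{< h}^2$ (resp.\ $Z_i^2 \in X_{< h}^3$), so the check reduces to ranging over tuples drawn from $M^*X_{< h}^2 \subseteq \pow_{< h}(D^*)$ (resp.\ $M^*X_{< h}^3 \subseteq \pow_{< h}(\pow_{< h}(D^*))$), which are of polynomial size. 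You should make this explicit, since it is precisely where the syntactic constraints of $(\QLQSR)^h$ beyond the mere bound on prefix length are used.
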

\begin{proof}
The satisfiability problem \textsf{SAT} for propositional logic can be
readily reduced to the one for $(\QLQSR)^{h}$-formulae, for any $h
\geq 2$, as follows. Given a formula $\rho \in \textsf{SAT}$, we
construct a quantifier-free $(\QLQSR)^{h}$-formula $\varphi_{\rho}$
by replacing each propositional letter $P_{i}$ in $\rho$ by the
quantifier-free formula $x_{i} \in X^{1}$, where $X^{1}$ is a fixed
variable of sort 1 and the $x_{i}$s are distinct variables of sort
0 in a one-one correspondence with the distinct propositional
letters in $\rho$. Plainly, $\rho$ is propositionally satisfiable if
and only if $\varphi_{\rho}$ is satisfiable by a $\QLQSR$-model.
Therefore the \textsf{NP}-hardness of the satisfiability problem for
$(\QLQSR)^{h}$-formulae follows.

To prove that our problem is in \textsf{NP}, we reason as follows.
Let
\begin{equation}
    \label{eq_QLQSR}
    \varphi \equiv \xi_U^1 \And \xi_{< h}^2 \And \xi_{< h}^3 \And \psi_1^1
    \And \ldots \And \psi_p^1 \And \psi_1^2\And \ldots\And \psi_k^2
    \And \chi
\end{equation}
be a satisfiable $(\QLQSR)^{h}$-formula, and let $\Hf$ be a set of
formulae constructed as follows.  Initially, we put
\[
\Hf := \{\xi_U^1, \xi_{< h}^2, \xi_{< h}^3,\psi_1^1,\ldots, \psi_p^1,
\psi_1^2, \ldots, \psi_k^2, \chi\}
\]
and then, we modify $\Hf$ according to the following six rules, until
no rule can be further applied:\footnote{We recall that an
implication $A \rightarrow B$ has to be regarded as a shorthand for
the disjunction $\neg A \vee B$.}
\begin{enumerate}[~~R1:]
\item if $\xi \equiv \neg\neg \xi_1$ is in $\Hf$, then $\Hf = (\Hf
\setminus \{\xi\}) \cup \{\xi_1\}$,

\item if $\xi \equiv \xi_1 \And \xi_2$ (resp., $\xi \equiv \neg (\xi_1
\Or \xi_2)$) is in $\Hf$ (i.e., $\xi$ is a conjunctive formula), then
we put $\Hf := (\Hf \setminus \{\xi\}) \cup \{\xi_1,\xi_2\}$
(resp., $\Hf := (\Hf \setminus \{\xi\}) \cup \{\neg\xi_1,\neg\xi_2\}$),

\item if $\xi \equiv \xi_1 \Or \xi_2$ (resp., $\xi \equiv \neg (\xi_1
\And \xi_2)$) is in $\Hf$ (i.e., $\xi$ is a disjunctive formula), then
we choose a $\xi_i$, $i \in \{1,2\}$, such that $H_{\varphi}\cup
\{\xi_i\}$ (resp., $H_{\varphi}\cup
\{\neg\xi_i\}$)
is satisfiable and put $\Hf := (\Hf \setminus \{\xi\}) \cup
\{\xi_i\}$ (resp., $\Hf := (\Hf \setminus \{\xi\}) \cup
\{\neg\xi_i\}$),

\item if $\xi \equiv \neg(\forall z_1)\ldots (\forall z_n)\varphi_0$
is in $\Hf$, then $\Hf := (\Hf \setminus \{\xi\}) \cup
\{\neg(\varphi_0)_{\bar{z}_1,\ldots,\bar{z}_n}^{z_1,\ldots,z_n}\}$,
where $\bar{z}_1,\ldots,\bar{z}_n$ are newly introduced variables of
sort 0,

\item if $\xi \equiv \neg(\forall Z_1^1)\ldots (\forall Z_m^1)\varphi_1$
is in $\Hf$, then $\Hf := (\Hf \setminus \{\xi\}) \cup
\{\neg(\varphi_1)_{\bar{Z}_1^1,\ldots,\bar{Z}_m^1}^{Z_1^1,\ldots,Z_m^1}\}$,
where $\bar{Z}_1^1,\ldots,\bar{Z}_m^1$ are fresh variables of sort 1,

\item if $\xi \equiv \neg(\forall Z_1^2)\ldots (\forall
Z_p^2)\varphi_2$ is in $\Hf$, then $\Hf := (\Hf \setminus \{\xi\})
\cup
\{\neg(\varphi_2)_{\bar{Z}_1^2,\ldots,\bar{Z}_p^2}^{Z_1^2,\ldots,Z_p^2}\}$,
where $\bar{Z}_1^2,\ldots,\bar{Z}_p^2$ are newly introduced variables
of sort 2.
\end{enumerate}
Plainly, the above construction terminates in $\mathcal{O}(|\varphi|)$
steps and if we put $\psi \equiv \bigwedge_{\xi \in \Hf}\xi$, it turns
out that
\begin{enumerate}[~~~~(a)]
\item $\psi$ is a satisfiable $(\QLQSR)^h$-formula,

\item $|\psi| = \mathcal{O}(|\varphi|)$, and

\item $\psi \rightarrow \varphi$ is a valid $\QLQSR$-formula.
\end{enumerate}
In view of (a)--(c) above, to prove that our problem is in
\textsf{NP}, it is enough to construct in nondeterministic polynomial
time a $\QLQSR$-interpretation and show that we can check in polynomial
time that it actually satisfies $\psi$.

Let $\model = (D,M)$ be a $\QLQSR$-model for $\psi$ and let $\model^*
= (D^*,M^*)$ be the relativized interpretation of $\model$ with respect
to a domain $D^{*}$ satisfying (\ref{DStar}), hence such that
$|D^{*}| = \mathcal{O}(|\psi|^{h+1})$, since $\psi$ is a
$(\QLQSR)^h$-formula (cf.\ Theorem \ref{correctness} and the
construction described in Sections \ref{ssseUniv} and \ref{machinery}).

In view of the remarks just before Definition~\ref{def:hlang}, to
complete our proof it is enough to check that
\begin{itemize}
\item $M^* X^2 \subseteq \pow_{< h}(D^*)$, for any free variable $X^2$
of sort 2 in $\psi$ (which entails that $|M^* X^2| =
\mathcal{O}(|D^{*}|^{h})$),

\item $M^* X^3 \subseteq \pow_{< h}(\pow_{< h}(D^*))$, for any free
variable $X^3$ of sort 3 in $\psi$ (which entails that $|M^* X^3| =
\mathcal{O}(|D^{*}|^{h^{2}})$), and

\item $\model^* \models \psi$ can be verified in deterministic
polynomial time.
\end{itemize}

To prove that $M^* X^2 \subseteq \pow_{< h}(D^*)$, for any free
variable $X^2$ in $\psi$, we reason as follows.  Let $X^2$ be a
variable of sort 2 occurring free in $\psi$.  From Definition
\ref{relintrp}, we recall that
\begin{eqnarray}
M^{*}X^2  &=&
    ((MX^2 \cap \pow(D^{*})) \setminus \{M^{*}X^1: X^1 \in ({\cal V}'_{1}\cup {\cal V}_1^F)\})
    \notag\\
     & & \qquad\cup \{M^{*}X^1: X^1 \in ({\cal V}'_{1} \cup {\cal
     V}_1^F),~MX^1 \in MX^2\}\, . \label{MStarXTwo}
\end{eqnarray}

Observe that
\begin{equation}
    MX^2 \subseteq \pow_{< h}(D). \label{MXTwo}
\end{equation}
Indeed, if the variable $X^2$ coincides with $X_{< h}^2$, then
(\ref{MXTwo}) follows from the fact that $\psi$ contains the conjunct
$\xi_{< h}^2$.  On the other hand, if $X^2$ is distinct from $X_{<
h}^2$, then $\psi$ contains either the conjunct $(\forall Z^1)(Z^1 \in X^2
\rightarrow Z^1 \in X_{< h}^2)$ or the conjunct $X^2 \in X_{< h}^3$. In the first case,
$(\forall Z^1)(Z^1 \in X^2
\rightarrow Z^1 \in X_{< h}^2)$ together with the conjunct
$\xi_{< h}^2$, implies again (\ref{MXTwo}).
From (\ref{MStarXTwo}) and (\ref{MXTwo}), we get $M^* X^2 \subseteq
\pow_{< h}(D^*)$. The other case is handled in a similar way.

Checking that $M^* X^3 \subseteq \pow_{< h}(\pow_{< h}(D^*))$, for any
free variable $X^3$ of sort 3 in $\psi$, can be carried out much as
was done for free variables of sort 2.

From what we have shown so far, it follows that
in nondeterministic polynomial time one can construct
\begin{itemize}
    \item the $(\QLQSR)^{h}$-formula $\psi$, as a result of
    applications of rules R1--R6 to the initial set $\Hf$
    (corresponding to the input formula $\varphi$) until saturation is
    reached,

    \item the $\QLQSR$-interpretation $\model^* = (D^*,M^*)$ (of
    $\psi$).
\end{itemize}
By the soundness of rules R1--R6, it follows that the $\QLQSR$-formula
$\psi \rightarrow \varphi$ is valid.  Thus, we obtain a succint
certificate of the satisfiability of $\varphi$ if we show that it is
possible to check in polynomial time that $\model^* \models \psi$
holds.  This is equivalent to show that we can check in polynomial
time that $\model^* \models \xi$, for every conjunct $\xi$ of $\psi$.
We distinguish the following cases.
\begin{description}
\item [$\xi$ is a quantifier-free atomic formula:] Since all variables
in $\xi$ are interpreted by $\model^*$ with sets of polynomial size,
the task of checking memberships and equalities among such sets can be
performed in polynomial time.

\begin{sloppypar}
\item [$\xi$ is a purely universal formula of level 1 $(\forall
z_1)\ldots(\forall z_n)\varphi_0$, with $n \leq h$:] We have that
$\model^* \models (\forall z_1)\ldots(\forall z_n)\varphi_0$ if and
only if $\model^*[z_1/u_1,\ldots,z_n/u_n] \models \varphi_0$, for
every $u_1,\ldots,u_n \in D^*$.  From the previous case, for any
$u_1,\ldots,u_n \in D^*$, one can compute in polynomial time whether
$\model^*[z_1/u_1,\ldots,z_n/u_n] \models \varphi_0$.  Since the
collection of such $n$-tuples $u_1,\ldots,u_n \in D^*$ has polynomial
size in $|\varphi|$, it turns out that one can check that $\model^*
\models (\forall z_1)\ldots(\forall z_n)\varphi_0$ in polynomial time.
\end{sloppypar}

\item [$\xi$ is a purely universal formula of level 2:]  If
\[
\xi \equiv \xi_{< h}^2 \equiv (\forall Z^1)\Big(Z^1 \in X_{< h}^2
\rightarrow \Big((\forall z_1)\ldots (\forall
      z_{h})\Big(\bigwedge_{i=1}^{h} z_i \in Z^1 \rightarrow
      \Big(\bigvee_{\substack{i,j=1\\  i<j}}^{h} z_i = z_j\Big)\Big)\Big)\Big) \, ,
\]
in order to verify that $\model^* \models \xi$, it is enough to check
that $\model^* X_{< h}^2 \subseteq \pow_{< h}(D^*)$, which can be
clearly done in polynomial time.

If $\xi \equiv (\forall Z^1)(Z^1 \in X^2 \rightarrow
Z^1 \in X_{< h}^2)$, with $X^2$ a free variable of sort 2, then in
order to verify that $\model^* \models \xi$ it is enough to check
whether $\model^* X^2 \subseteq \model^* X_{< h}^2$, which again can
be done in polynomial time.

Finally, if $\xi \equiv (\forall Z_1^1)\ldots (\forall Z_m^1)((Z_1^1
\in X_{< h}^2 \wedge \ldots \wedge Z_m^1 \in X_{< h}^2) \rightarrow
\varphi_1)$ where $m \leq h$ and $\varphi_1$ is a propositional
combination of quantifier-free atomic formulae and of purely universal formulae of level 1 of the form $(\forall z_1)\ldots(\forall
z_n)\varphi_0$, with $n \leq h$ (cf.\
Definition~\ref{def:hlang}(\ref{levelTwo})), then $\model^* \models
\xi$ if and only if $\model^*[Z_1^1/U_1^1,\ldots,Z_m^1/U_m^1] \models
\varphi_1$, for every $U_1^1, \ldots,U_m^1 \in M^*X_{< h}^2$.  Again,
the latter task can be accomplished in polynomial time, since, in view
of the previous two cases $\model^*[Z_1^1/U_1^1,\ldots,Z_m^1/U_m^1]
\models \varphi_1$ can be checked in polynomial time, for each
$m$-tuple $U_1^1, \ldots,U_m^1 \in M^*X_{< h}^2$, and the number of
such $m$-tuples is polynomial.

\item [$\xi$ is a purely universal formula of level 3:]
This case can be handled much along the same lines of the previous
case.
\end{description}
Summing up, we have shown that the satisfiability problem for
$(\QLQSR)^h$-formulae is in \textsf{NP}.  This, together with its
\textsf{NP}-hardness, which was shown before, implies the
\textsf{NP}-completeness of our problem.
\end{proof}

In the next section we show how the fragment $\QLQSR$ can be used
to formalize the modal logic $\Kqc$.

\subsection{Applying $\QLQSR$ to modal logic}
The \emph{modal language} $\mathsf{L}_{M}$ is based on a countably
infinite set of propositional letters ${\mathcal P} =
\{p_1,p_2,\ldots\}$, the classical propositional connectives `$\neg$',
`$\And$' , and `$\Or$', the modal operators `$\square$', `$\lozenge$'
(and the parentheses).  $\mathsf{L}_{M}$ is the smallest set such that
${\mathcal P} \subseteq \mathsf{L}_{M}$, and such that if
$\varphi,\psi \in \mathsf{L}_{M}$, then $\neg \varphi$, $\varphi \And
\psi$, $\varphi \Or \psi$, $\square \varphi$, $\lozenge\varphi\in
\mathsf{L}_{M}$.  Lower case letters like $p$ denote elements of
${\mathcal P}$ and Greek letters like $\varphi$ and $\psi$ represent
formulae of $\mathsf{L}_{M}$.  Given a formula $\varphi$ of
$\mathsf{L}_{M}$, we indicate with $\SubF(\varphi)$ the collection of
the subformulae of $\varphi$.  The \emph{modal depth} of a formula
$\varphi$ is the maximum nesting depth of modalities occurring in
$\varphi$. In the rest of the paper we also make use of the propositional connective `$\rightarrow$' defined in terms of
`$\neg$' and `$\Or$' as: $\varphi \rightarrow \psi \equiv \neg \varphi \Or \psi$.

A \emph{normal modal logic} $\mathsf{N}$ is any subset of $\mathsf{L}_{M}$
which contains all the tautologies and the axiom
$$
\K: \,\, (p_1 \rightarrow p_2) \rightarrow (\square p_1 \rightarrow \square p_2) \, ,
$$
and which is closed with respect to the following rules:
\begin{description}
\item [(Modus Ponens):] if $\varphi, \varphi \rightarrow \psi \in \mathsf{N}$, then $\psi \in \mathsf{N}$,
\item [(Necessitation):] if $\varphi \in \mathsf{N}$, then $\square \varphi \in \mathsf{N}$,
\item [(Substitution):] if $\varphi \in \mathsf{N}$, then $\mathsf{s}\, \varphi  \in \mathsf{N}$,
\end{description}
where $\varphi,\psi\in \mathsf{L}_{M}$, and the formula $\mathsf{s}\, \varphi$ is the result of uniformly substituting in $\varphi$ propositional letters with formulae (the reader may consult a text on modal logic like \cite{ModLog01} for more details).

A \emph{Kripke frame} is a pair $\langle W,R\rangle$ such that $W$ is
a nonempty set of possible worlds and $R$ is a binary relation on $W$
called \emph{accessibility relation}.  If $R(w,u)$ holds, we say that
the world $u$ is accessible from the world $w$.  A \emph{Kripke model}
is a triple $\langle W,R,h\rangle$, where $\langle W,R\rangle$ is a
Kripke frame and $h$ is a function mapping propositional letters into
subsets of $W$.  Thus, $h(p)$ is the set of all the worlds in which
$p$ is true.

Let $\Kr = \langle W,R,h\rangle$ be a Kripke model and let $w$ be a
world in $\Kr$. Then, for every $p \in \mathcal{P}$ and for every
$\varphi,\psi \in {\mathsf{L}_{M}}$, the satisfaction relation
$\models$ is defined as follows:
\begin{itemize}
\item $\Kr, w \models p$ ~iff~ $w \in h(p)$;

\item $\Kr, w \models \varphi \Or \psi$ ~iff~ $\Kr, w \models \varphi$ or $\Kr, w \models \psi$;

\item $\Kr, w \models \varphi \And \psi$ ~iff~ $\Kr, w \models \varphi$ and $\Kr, w \models \psi$;

\item $\Kr, w \models \neg \varphi$ ~iff~ $\Kr, w \not\models \varphi$;

\item $\Kr, w \models \square\varphi$ ~iff~ $\Kr, w' \models\varphi$,
for every $w'\in W$ such that $(w,w')\in R$;

\item $\Kr, w \models \lozenge\varphi$ ~iff~ there is a $w'\in W$ such that $(w,w')\in R$ and $\Kr, w' \models\varphi$.
\end{itemize}
A formula $\varphi$ is said to be \emph{satisfied} at $w$ in $\Kr$ if
$\Kr, w \models \varphi$; $\varphi$ is said to be \emph{valid} in
$\Kr$ (and we write $\Kr \models \varphi$), if $\Kr,w \models
\varphi$, for every $w \in W$.

The smallest normal modal logic is $\K$, which contains only the modal
axiom $\K$ and whose accessibility relation $R$ can be any binary
relation.  The other normal modal logics admit together with $\K$
other modal axioms drawn from the
ones in Table \ref{tab:modalax}.

Translation of a normal modal logic into the $\QLQSR$ language is
based on the semantics of propositional and modal operators. For any normal modal logic, the formalization of
the semantics of modal operators depends on the axioms that characterize the logic.

In the case of the logic $\Kqc$, whose decision problem has been shown
to be ${\mathsf{NP}}$-complete in \cite{Lad77}, the modal formulae
$\square\varphi$ and $\lozenge\varphi$ can be expressed in the
$\QLQSR$ language and thus the logic $\Kqc$ can be entirely translated
into the $\QLQSR$ fragment.  This is shown in what follows.

\begin{table}[tb]
\begin{center}
\begin{tabular}{|l|l|l|}
  \hline
Axiom & Schema & Condition on $R$ (see Table \ref{tab:accesrel}) \\\hline
$\T$  & $\square p \rightarrow p$ & Reflexive \\
$\Ac$ & $\lozenge p \rightarrow \square\lozenge p$ & Euclidean \\
$\B$  & $p \rightarrow \square\lozenge p$ & Symmetric \\
$\Aq$ & $\square p \rightarrow \square \square p$ & Transitive \\
$\D$  & $\square p \rightarrow \lozenge p$ & Serial: $(\forall w)(\exists u)R(w,u)$\\
\hline
\end{tabular}
\caption{\label{tab:modalax} Axioms of normal modal logics}
\end{center}
\end{table}

\subsubsection{The logic $\Kqc$}
The normal modal logic $\Kqc$ is obtained from the logic $\K$ by
adding to $\K$ the axioms
$\Aq$ and $\Ac$ listed in Table \ref{tab:modalax}.  Semantics of the
modal operators $\square$ and $\lozenge$ for the logic $\Kqc$ can be
described as follows.  Given a formula $\varphi$ of $\Kqc$ and a
Kripke model $\Kr = \langle W,R,h\rangle$, we put:
\begin{itemize}
\item $\Kr\models \square \varphi$ ~iff~ $\Kr,v \models \varphi$, for every $v \in W$ s.t. there is a $w' \in W$ with $(w',v)\in R$,
\item $\Kr\models \lozenge \varphi$ ~iff~ $\Kr,v \models \varphi$, for some $v \in W$ s.t. there is a $w' \in W$ with $(w',v)\in R$.
\end{itemize}
This formulation allows one to express a formula $\varphi$ of $\Kqc$
into the $\QLQSR$ fragment.  In order to simplify the definition of
the translation function $\TKqc$ introduced below, we give the notion
of the ``empty formula'', to be denoted by $\Lambda$, and which will
not be interpreted in any particular way.  The only requirement on
$\Lambda$ needed for the definitions to be given below is that
$\Lambda \And \psi$ and $\psi \And \Lambda$ must be regarded as
syntactic variations of $\psi$, for any $\QLQSR$-formula $\psi$.

Intuitively, the translation function $\TKqc$ associates to each formula $\varphi$
of $\Kqc$ a $\QLQSR$-formula defining a variable $X_{\varphi}$ of sort 1, which denotes
the subset $W_{\varphi}$ of $W$ such that $\Kr, w \models \varphi$ if and only if $w \in W_{\varphi}$, for every Kripke model
$\Kr = \langle W,R,h\rangle$. We proceed as follows.

For every propositional letter $p$, let $\TKqcuno(p) = X_{p}^{1}$,
with $X_{p}^{1}\in {\cal V}_1$, and let $\TKqcdue : \Kqc \rightarrow
\QLQSR$ be the function defined recursively as follows:

\begin{itemize}
\item $\TKqcdue(p) = \Lambda$,

\item $\TKqcdue(\neg \varphi) = (\forall z)(z \in X_{\neg \varphi}^{1}
\leftrightarrow \neg(z \in X_{\varphi}^{1})) \And \TKqcdue(\varphi)$,

\item $\TKqcdue(\varphi_1 \And \varphi_2) = (\forall z)(z \in
X_{\varphi_1 \And \varphi_2}^{1} \leftrightarrow (z \in
X_{\varphi_1}^{1} \And z \in X_{\varphi_2}^{1})) \And
\TKqcdue(\varphi_1) \And \TKqcdue(\varphi_2)$,

\item $\TKqcdue(\varphi_1 \Or \varphi_2) = (\forall z)(z \in
X_{\varphi_1 \Or \varphi_2}^{1} \leftrightarrow (z \in
X_{\varphi_1}^{1} \Or z \in X_{\varphi_2}^{1})) \And
\TKqcdue(\varphi_1) \And \TKqcdue(\varphi_2)$,

\item $\TKqcdue(\square\psi) = (\forall z_1)((\neg(\forall
z_2)\neg(\langle z_2,z_1\rangle \in X_R^{3})) \rightarrow z_1 \in
X_{\psi}^{1})\rightarrow (\forall z)(z \in X_{\square \psi}^{1})$

\hfill $\And \neg (\forall z_1)\neg ((\neg(\forall z_2)\neg(\langle
z_2,z_1 \rangle \in X_R^{3})) \And \neg(z_1 \in X_{\psi}^{1}))
\rightarrow (\forall z)\neg (z \in X_{\square \psi}^{1}) \And
\TKqcdue(\psi)$,

\item $\TKqcdue(\lozenge\psi) = \neg(\forall z_1)\neg ((\neg(\forall
z_2)\neg(\langle z_2,z_1\rangle \in X_R^{3})) \And z_1 \in
X_{\psi}^{1})\rightarrow (\forall z)(z \in X_{\lozenge \psi}^{1})$

\hfill $\And (\forall z_1)(((\forall z_2)\neg (\langle z_2,z_1\rangle
\in X_R^{3})) \Or \neg(z_1 \in X_{\psi}^{1}))) \rightarrow (\forall
z)\neg (z \in X_{\lozenge \psi}^{1}) \And \TKqcdue(\psi)$,
\end{itemize}
where $\Lambda$ is the empty formula, $X_{\neg \varphi}^{1}, X_{\varphi}^{1}, X_{\varphi_1 \And \varphi_2}^{1}, X_{\varphi_1 \Or \varphi_2}^{1}, X_{\varphi_1}^{1}, X_{\varphi_2}^{1} \in {\cal V}_1$, and $X_R^3\in {\cal V}_3$.

Finally, for every $\varphi$ in $\Kqc$, if $\varphi$ is a
propositional letter in $\mathcal{P}$ we put $\TKqc(\varphi) =
\TKqcuno(\varphi)$, otherwise $\TKqc(\varphi) = \TKqcdue(\varphi)$.
Next, by means of the following formulae, we characterize a variable
$X_R^3$ of sort 3, intended to denote the accessibility relation $R$
of the logic $\Kqc$:
\begin{itemize}
\item $\chi_1 = (\forall z_1)(\forall z_2)(\langle z_1, z_2\rangle \in X_{< 3}^3)$,

\item $\chi_2 = (\forall Z^{2})((Z^{2} \in X_{< 3}^3) \rightarrow ((Z^{2} \in X_R^{3} \leftrightarrow \neg (\forall
    z_1)(\forall z_2)\neg(\langle z_1,z_2\rangle = Z^{2}))))$,

\item $\chi_3 = (\forall z_1,z_2,z_3)((\langle
    z_1,z_2\rangle \in X_R^3 \And \langle z_2,z_3\rangle \in
    X_R^3)\rightarrow \langle z_1,z_3\rangle \in X_R^3)$,

\item $\chi_4 = (\forall z_1,z_2,z_3)((\langle
    z_1,z_2\rangle \in X_R^3 \And \langle z_1,z_3\rangle \in
    X_R^3)\rightarrow \langle z_2,z_3\rangle \in X_R^3)$,

\item $\psi_1^2 = (\forall Z^2)(Z^2 \in X_R^3 \rightarrow Z^2 \in X_{<
3}^3)$.
\end{itemize}

Correctness of the translation is stated by the following lemma.
\begin{lemma}\label{leK45}
For every formula $\varphi$ of the logic $\TKqc$, $\varphi$ is
satisfiable in a model $\Kr = \langle W,R,h\rangle$ if and only if
there is a
$\QLQSR$-interpretation satisfying $x \in X_{\varphi}^1$.
\end{lemma}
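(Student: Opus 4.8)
The plan is to establish the two implications separately, each by exhibiting a construction that turns a $\Kqc$-model into a $\QLQSR$-interpretation and vice versa, and then to verify correctness by structural induction on the subformulae of $\varphi$. I read the statement so that ``a $\QLQSR$-interpretation satisfying $x \in X_\varphi^1$'' refers to a model of the conjunction $\Psi_\varphi$ of the translation $\TKqc(\varphi)$ with the relation-characterizing conjuncts $\chi_1, \chi_2, \chi_3, \chi_4, \psi_1^2$ and the literal $x \in X_\varphi^1$. The invariant driving the whole argument is that, in a model of $\Psi_\varphi$, the sort-$1$ variable $X_\psi^1$ assigned by $\TKqc$ to each subformula $\psi$ of $\varphi$ denotes precisely the set $W_\psi = \{w \in W : \Kr, w \models \psi\}$ of worlds satisfying $\psi$; the literal $x \in X_\varphi^1$ then says exactly that $\varphi$ is satisfied at the world $Mx$.

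For the implication from $\QLQSR$ to $\Kqc$, I would take a $\QLQSR$-interpretation $\model = (D,M)$ with $\model \models \Psi_\varphi$ and read off a Kripke model $\Kr = \langle W, R, h\rangle$ by putting $W := D$, declaring $(a,b) \in R$ iff $M\langle a,b\rangle \in MX_R^3$, and setting $h(p) := MX_p^1$ for every propositional letter $p$. The conjuncts $\chi_1$, $\chi_2$ and $\psi_1^2$ guarantee that the extension of $X_R^3$ is a collection of Kuratowski pairs, so that $R$ is a well-defined binary relation on $W$, while $\chi_3$ and $\chi_4$ state verbatim that $R$ is transitive and Euclidean; hence $\Kr$ is a $\Kqc$-frame. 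I would then prove $MX_\psi^1 = W_\psi$ by induction on the structure of $\psi$. The base case is the definition of $h$; the Boolean cases follow at once from the defining biconditionals in $\TKqcdue(\neg\psi)$, $\TKqcdue(\psi_1 \wedge \psi_2)$ and $\TKqcdue(\psi_1 \Or \psi_2)$; and the modal cases follow from the two implications making up $\TKqcdue(\square\psi)$ and $\TKqcdue(\lozenge\psi)$, which force $MX_{\square\psi}^1$ (respectively $MX_{\lozenge\psi}^1$) to equal $D$ or $\emptyset$ according to whether $\psi$ holds at every (respectively at some) element possessing a predecessor under $R$ --- this is literally the clause defining $\Kr \models \square\psi$ (respectively $\Kr \models \lozenge\psi$). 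Since $\model \models x \in X_\varphi^1$ yields $Mx \in W_\varphi$, we obtain $\Kr, Mx \models \varphi$, so $\varphi$ is satisfiable.

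For the converse I would run the same dictionary backwards. Given $\Kr = \langle W, R, h\rangle$ and a world $w_0$ with $\Kr, w_0 \models \varphi$, I would set $D := W$, $MX_p^1 := h(p)$, $MX_R^3 := \{M\langle a,b\rangle : (a,b) \in R\}$, interpret each auxiliary variable $X_\psi^1$ by $W_\psi$, and put $Mx := w_0$, choosing $MX_{<3}^3$ so that the conjuncts $\chi_1, \chi_2, \psi_1^2$ are all satisfied. The transitivity and Euclideanness of $R$ make $\chi_3, \chi_4$ true, and the same induction as above --- now used to \emph{check}, rather than to deduce, the equalities $MX_\psi^1 = W_\psi$ --- shows that every defining conjunct of $\TKqcdue(\varphi)$ holds. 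As $w_0 \in W_\varphi = MX_\varphi^1$, we conclude $\model \models x \in X_\varphi^1$.

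The crux, and the only place where the distinctive $\Kqc$-semantics genuinely enters, is the modal step of the induction. There one must verify that the set computed inside $\QLQSR$ by the formula $\neg(\forall z_2)\neg(\langle z_2, z_1\rangle \in X_R^3)$ --- ``$z_1$ has a predecessor under $X_R^3$'' --- coincides with the set of worlds over which the clauses for $\Kr \models \square\psi$ and $\Kr \models \lozenge\psi$ range, and that the all-or-nothing value thereby forced on $X_{\square\psi}^1$ and $X_{\lozenge\psi}^1$ propagates coherently through nested modalities (which is exactly what makes the induction close, since in $\Kqc$ the truth value of a modal subformula is world-independent). The remaining work --- that $M\langle a,b\rangle \in MX_R^3$ faithfully tracks $(a,b) \in R$ under the Kuratowski reading of pairs, so that membership and equality of pair terms are evaluated correctly in both directions --- is routine but delicate bookkeeping, to be carried out in the spirit of the corresponding $\Sc$-argument of \cite{CanNic08}.
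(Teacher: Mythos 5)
Your proposal is correct and follows essentially the same route as the paper: both hinge on a structural induction showing that the sort-1 variable attached to each subformula $\psi$ denotes its truth set $W_\psi$ (the paper phrases this as the pointwise biconditional $\Kr,w\models\psi$ iff $M\models y\in X_\psi^1$ for $My=w$), with the modal cases resolved by the all-or-nothing value that the two implications in $\TKqc(\square\psi)$ and $\TKqc(\lozenge\psi)$ force on $X_{\square\psi}^1$ and $X_{\lozenge\psi}^1$. If anything, you are more thorough than the paper, whose proof only carries out the left-to-right construction (from a Kripke model to a $\QLQSR$-interpretation) and leaves both the converse direction and the implicit reading of the statement (that the interpretation must satisfy $\TKqc(\varphi)$ together with $\chi_1,\ldots,\chi_4,\psi_1^2$, not merely the literal $x\in X_\varphi^1$) unspoken.
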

\begin{proof}
Let $\bar{w}$ be a world in $W$.  We construct a
$\QLQSR$-interpretation $\model=(W,M)$ as follows:
\begin{itemize}
\item $Mx = \bar{w}$,

\item $M X_{p}^{1} = h(p)$, where $p$ is a propositional letter and
$X_{p}^{1} = \TKqc(p)$,

\item $M\TKqc(\psi) = \true$, for every subformula $\psi$ of
$\varphi$,
distinct from a propositional letter.
\end{itemize}
To prove the lemma, it would be enough to show that $\Kr,\bar{w}
\models \varphi$ ~iff~ $M \models x \in X_{\varphi}^{1}$.  However, it
is more convenient to prove the following more general property:\\
\indent \emph{Given a $w \in W$ and a $y \in {\cal V}_0$ such that $My =
w$, we have
$$
\Kr,w \models \varphi \mbox{ ~~iff~~ } M \models y \in
X_{\varphi}^1.
$$}
We proceed by structural induction on $\varphi$ by considering for
simplicity only the relevant cases in which $\varphi = \square \psi$
and $\varphi = \lozenge \psi$.
\begin{itemize}
\item Let $\varphi = \square \psi$ and assume that $\Kr,w \models
\square \psi$.  Let $v$ be a world of $W$ such that $\langle
u,v\rangle \in R$ for some $u\in W$, and let $x_1,x_2 \in {\cal V}_0$
be such that $v = Mx_1$ and $u = Mx_2$.  We have that $\Kr,v \models
\psi$ and, by inductive hypothesis, $M \models x_1 \in X_{\psi}^{1}$.
Since $M \models \TKqc(\square \psi)$, then $M \models (\forall
z_1)((\neg(\forall z_2)\neg (\langle z_2,z_1\rangle\in X_R^{3}))
\rightarrow z_1 \in X_{\psi}^{1}) \rightarrow (\forall z)(z \in
X_{\square \psi}^{1})$.  Hence $M[z_1/v,z_2/u,z/w]\models (\langle
z_2,z_1\rangle\in X_R^{3} \rightarrow z_1 \in X_{\psi}^{1})
\rightarrow z \in X_{\square \psi}^{1}$ and thus $M \models (\langle
x_2,x_1\rangle\in X_R^{3} \rightarrow x_1 \in X_{\psi}^{1})
\rightarrow y \in X_{\square \psi}^{1}$.  Since $M \models \langle
x_2,x_1\rangle\in X_R^{3} \rightarrow x_1 \in X_{\psi}^{1}$, by modus
ponens we have the thesis.  The thesis follows also in the case in
which there is no $u$ such that $\langle u,v\rangle\in X_R^{3}$.  In
fact, in that case $M \models \langle x_2,x_1\rangle\in X_R^{3}
\rightarrow x_1 \in X_{\psi}^{1}$ holds for any $x_2 \in {\cal V}_0$.

Consider next the case in which $\Kr,w \not\models \square \psi$.
Then, there must be a $v \in W$ such that $\langle u,v \rangle\in
X_R^{3}$, for some $u \in W$, and $\Kr,v \not\models \psi$.  Let
$x_1,x_2 \in {\cal V}_0$ be such that $Mx_1 = v$ and $Mx_2 = u$.
Then, by inductive hypothesis, $M \not\models x_1 \in X_{\psi}^{1}$.

By definition of $M$, we have $M \models \neg (\forall z_1)\neg
((\neg(\forall z_2)\neg (\langle z_2,z_1\rangle\in X_R^{3})) \And
\neg(z_1 \in X_{\psi}^{1})) \rightarrow (\forall z)\neg (z \in
X_{\square \psi}^{1})$.  By the above instantiations and by the
hypotheses, we have that $M \models ((\langle x_2,x_1\rangle\in X_R^{3})
\And \neg(x_1 \in X_{\psi}^{1}))\rightarrow \neg(y \in
X_{\square\psi}^{1})$ and $M \models (\langle x_2,x_1\rangle\in
X_R^{3}) \And \neg(x_1 \in X_{\psi}^{1})$.  Thus, by modus ponens, we
obtain the thesis.

\item Let $\varphi = \lozenge \psi$ and assume that $\Kr,w \models
\lozenge \psi$.  Then there are $u,v \in W$ such that $\langle u,v
\rangle\in R$ and $\Kr,v \models \psi$.  Let $x_1,x_2 \in {\cal V}_0$
be such that $Mx_1 = v$ and $Mx_2 = u$.  Then, by inductive
hypothesis, $M \models x_1 \in X_{\psi}^{1}$.  Since $M \models
\TKqc(\lozenge \psi)$, it follows that $M \models \neg (\forall
z_1)\neg ((\neg(\forall z_2)\neg(\langle z_2,z_1\rangle \in X_R^{3}))
\And z_1 \in X_{\psi}^{1})\rightarrow (\forall z)(z \in X_{\lozenge
\psi}^{1})$.  By the hypotheses and the variable instantiations above
it follows that $M \models ((\langle x_2,x_1\rangle \in X_R^{3}) \And
x_1 \in X_{\psi}^{1}) \rightarrow y \in X_{\lozenge\psi}^{1}$ and $M
\models (\langle x_2,x_1\rangle \in X_R^{3}) \And x_1 \in X_{\psi}^{1}$.
Finally, by an application of modus ponens the thesis follows.

On the other hand, if $\Kr,w \not\models \lozenge \psi$, then for
every $v \in W$, either there is no $u \in W$ such that $\langle u,v
\rangle\in R$, or $\Kr,v \not\models \psi$.  Let $x_1,x_2 \in {\cal
V}_0$ be such that $Mx_1 = v$ and $Mx_2 = u$.  If $\Kr,v \not\models
\psi$, by inductive hypothesis, we have that $M \not\models y \in
X_{\psi}^{1}$.

Since $M \models (\forall z_1) (((\forall z_2)\neg (\langle
z_2,z_1\rangle \in X_R^{3})) \Or \neg(z_1 \in X_{\psi}^{1}))\rightarrow
(\forall z)\neg(z \in X_{\lozenge \psi}^{1})$, by the hypotheses and
by the variable instantiations above we get $M \models (\neg (\langle
x_2,x_1\rangle \in X_R^{3}) \Or \neg( x_1 \in X_{\psi}^{1})) \rightarrow
\neg (y \in X_{\lozenge\psi}^{1})$ and $M \models (\neg(\langle
x_2,x_1\rangle \in X_R^{3}) \Or \neg( x_1 \in X_{\psi}^{1}))$.  Finally,
by modus ponens we infer the thesis.
\end{itemize}
\end{proof}
It can be easily verified that $\TKqc(\varphi)$ is polynomial in the
size of $\varphi$ and that its satisfiability can be checked in
nondeterministic polynomial time since the formula
$$
\xi_W^1 \And \xi_{< 3}^2 \And \xi_{< 3}^3 \And \psi_1^2 \And (\chi_1
\And \chi_2 \And \chi_3 \And \chi_4 \And \TKqc(\varphi))
$$
belongs to $(\QLQSR)^3$.\footnote{$\xi_W^1$ is intended to
characterize a nonempty set of possible worlds.} Thus, the decision
algorithm for $\QLQSR$ we have presented and the translation function
described above yield a nondeterministic polynomial decision procedure
for testing the satisfiability of any formula $\varphi$ of $\Kqc$.

\section{Conclusions and future work}
We have presented a decidability result for the satisfiability problem
for the fragment $\QLQSR$ of multi-sorted stratified syllogistic
embodying variables of four sorts and a restricted form of
quantification.  As the semantics of the modal formulae $\square
\varphi$ and $\lozenge \varphi$ in the modal logic $\Kqc$
can be easily formalized in a fragment of $\QLQSR$, admitting a
nondeterministic polynomial decision procedure, we obtained an
alternative proof of the \textsf{NP}-completeness of $\Kqc$. The results
reported in the paper offer numerous hints of future work, some of which are discussed in what follows.

 Recently, we have analyzed several fragments of elementary set theory. It
will be interesting to ameliorate existing techniques to verify in a formal way the truth
of expressivity results that for the moment we have only conjectured. Moreover, we plan to
find complexity results for the fragments $\TLQSR$ (cfr. \cite{CanNic08}) and $\QLQSR$, and for some of their sublanguages
like, for instance, the sublanguages of $\QLQSR$ characterized by the fact that quantifier prefixes have
length bounded by a constant. According to the construction of Section \ref{ssseUniv} small models for
formulae of these sublanguages have a finite domain $D^*$ that is polynomial in the size of the formula. However, their
formulae are not subject to the syntactical constraints characterizing formulae of the $(\QLQSR)^h$ languages and allowing
the satisfiability problem for the $(\QLQSR)^h$ fragments to be \textsf{NP}-complete.

As we mentioned in the Introduction, stratified syllogistics have been studied less than one sorted multi-level ones.
Thus, a comparison of the results obtained in this paper with the results regarding one sorted multi-level
set theoretic decidability is in order.

Formalizations of modal logics in set theory have already been
provided within the framework of hyperset theory \cite{BaMo96} and of
weak set theories \cite{DMP95}, without the extensionality and
foundation axioms.

We intend to continue our study, started with \cite{CanNic08},
concerning the limits and possibilities of expressing modal, and
more generally, non-classical logics in the context of stratified
syllogistics.  Currently, in the case of modal logics characterized by
a liberal accessibility relation like $\K$, we are not able to
translate the modal formulae $\square \varphi$ and $\lozenge \varphi$
in $\QLQSR$. We plan to verify if $\QLQSR$ allows one to express modal logics with nesting of modal operators of bounded length.
We also intend to investigate extensions of $\QLQSR$ which
allow one to express suitably constrained occurrences of the
composition operator on binary relations and of the set-theoretic
operator of general union. We expect that these extensions will make it possible to express
all the normal modal logic systems and several multi-modal logics.
Finally, since within $\QLQSR$ we are able to express Boolean
operations on relations, we plan to investigate the possibility of
translating fragments of Boolean modal logic and expressive description logics
admitting boolean constructors over roles.

\bibliographystyle{plain}

\end{document}